\definecolor{ocre}{RGB}{243,102,25} 
\definecolor{darkocre}{RGB}{121,51,12} 
\definecolor{lightocre}{RGB}{255,150,37} 
\definecolor{verylightocre}{RGB}{255,204,50} 
\definecolor{soccerfield}{RGB}{107,142,35} 
\definecolor{lightgray}{RGB}{200,200,200} 
\definecolor{warmblue}{RGB}{51,102,153} 
\definecolor{lightwarmblue}{RGB}{105,141,198} 
\definecolor{sepia}{RGB}{112,66,20} 
\tikzset{
  on each segment/.style={
    decorate,
    decoration={
      show path construction,
      moveto code={},
      lineto code={
        \path [#1]
        (\tikzinputsegmentfirst) -- (\tikzinputsegmentlast);
      },
      curveto code={
        \path [#1] (\tikzinputsegmentfirst)
        .. controls
        (\tikzinputsegmentsupporta) and (\tikzinputsegmentsupportb)
        ..
        (\tikzinputsegmentlast);
      },
      closepath code={
        \path [#1]
        (\tikzinputsegmentfirst) -- (\tikzinputsegmentlast);
      },
    },
  },
  mid arrow/.style={postaction={decorate,decoration={
        markings,
        mark=at position .5 with {\arrow[#1]{stealth}}
      }}},
}
\newcommand{\M}{\mathcal{M}}
\newcommand{\R}{\mathbb{R}}
\newcommand{\calW}{\mathcal{W}}
\newcommand{\g}{\mathfrak{g}}
\newcommand{\h}{\mathfrak{h}}
\newcommand{\euc}{\mathfrak{e}}
\newcommand{\dVolg}{\text{\textup{dVol}}_\g}
\newcommand{\dVol}{\text{\textup{dVol}}}
\newcommand{\dVolgn}{\text{\textup{dVol}}_{\g_n}}
\providecommand{\vp}{\varphi}
\newcommand{\tI}{\tilde{I}}
\newcommand{\pl}{\partial}
\renewcommand{\div}{\operatorname{div}}
\renewcommand{\b}{\textup{\textbf{b}}}
\newcommand{\e}{\varepsilon}
\providecommand{\dist}{\operatorname{dist}}
\providecommand{\SO}{\operatorname{SO}}
\providecommand{\GL}{\operatorname{GL}}
\providecommand{\Hom}{\operatorname{Hom}}
\newcommand{\nablaLC}{\nabla^{LC}}
\newcommand{\beq}{\begin{equation}}
\newcommand{\eeq}{\end{equation}}
\newcommand{\brk}[1]{\left(#1\right)}          
\begin{document}

\title*{Limits of distributed dislocations in geometric and constitutive paradigms}
\author{Marcelo Epstein, Raz Kupferman and Cy Maor}
\institute{Marcelo Epstein \at University of Calgary \email{mepstein@ucalgary.ca}
\and Raz Kupferman \at The Hebrew University of Jerusalem \email{raz@math.huji.ac.il}
\and Cy Maor \at The Hebrew University of Jerusalem \email{cy.maor@mail.huji.ac.il}}
\maketitle

\abstract{
The 1950's foundational literature on rational mechanics exhibits two somewhat distinct paradigms to the representation of continuous distributions of defects in solids.
In one paradigm, the fundamental objects are geometric structures on the body manifold, e.g., an affine connection and a Riemannian metric, which represent its internal microstructure.
In the other paradigm, the fundamental object is the constitutive relation; if the constitutive relations satisfy a property of material uniformity,  then it induces certain geometric structures on the manifold.
In this paper, we first review these paradigms, and show that they are equivalent if the constitutive model has a discrete symmetry group (otherwise, they are still consistent, however the geometric paradigm contains more information). 
We then consider bodies with continuously-distributed edge dislocations, and show, in both paradigms, how they can be obtained as homogenization limits of bodies with finitely-many dislocations as the number of dislocations tends to infinity.
Homogenization in the geometric paradigm amounts to a convergence of manifolds; in the constitutive paradigm it amounts to a $\Gamma$-convergence of energy functionals.
We show that these two homogenization theories are consistent, 
and even identical in the case of constitutive relations having discrete symmetries.
}

\section{Introduction}

\subsection{Geometric and constitutive paradigms}

\paragraph{Geometric paradigm: body manifolds}
The 1950's foundational literature on rational mechanics exhibits two somewhat distinct paradigms to 
the representation of continuous distributions of defects in solids.
On the one hand, there is a paradigm promoted by Kondo \cite{Kon55}, Nye \cite{Nye53}, Bilby \cite{BBS55} and later Kr\"oner (e.g.~\cite{Kro81}), in which solid bodies are modeled as geometric objects---manifolds---and their internal microstructure is represented by sections of fiber bundles, such as a metric and an affine connection. 

More specifically, in \cite{Kon55,Nye53,BBS55}, the body manifold is assumed to be a smooth manifold $\M$, endowed with a notion of distant parallelism, which amounts to defining a curvature-free affine connection $\nabla$. The connection is generally non-symmetric, and its torsion tensor is associated with the density of dislocations. This geometric model is motivated by an analysis of Burgers circuits, which in the presence of dislocations exhibit geodesic rectangles whose opposite sides are not of equal lengths---a signature of torsion (see Section~\ref{sec:homogenization_geo} for a discussion of Burgers circuits and Burgers vectors in this setting).

Note that modulo the choice of a basis at a single point, the definition of a distant parallelism is equivalent  to a choice of a basis for the tangent bundle at each point (i.e., a global smooth section of the frame bundle). Intuitively, the frame field at each point corresponds to the crystalline axes one would observe under a microscope. Torsion is a measure for how those local bases twist when moving from one point to another. 

The choice of local bases induces a Riemannian metric $\g$, known as a \emph{reference} or an \emph{intrinsic} metric. 
The intrinsic metric is the metric with respect to which the bases are orthonormal; although no specific constitutive response is assumed ab initio, it is interpreted as the metric that a small neighborhood would assume if it were cut off from the rest of the body, and allowed to relax its elastic energy. 

The reference metric $\g$ induces also a Riemannian (Levi-Civita) connection, denoted $\nablaLC$, which differs from $\nabla$, unless the torsion vanishes. The Riemannian connection, unlike $\nabla$, is generally non-flat; its curvature, if non-zero, is an obstruction for the existence of a  strain-free global reference configuration.
Finally, a triple $(\M,\g,\nabla)$, where $\nabla$ is a flat connection, metrically consistent with $\g$, is known as a {\em Weitzenb\"ock space} or a {\em Weitzenb\"ock manifold} \cite{Wei23} (a notion originating from relativity theory, see e.g.~\cite{HS79,AP04}; for its use in the context of distributed dislocations, see e.g.~\cite{YG12b, OY14,KM15,KM15b}).

\paragraph{Constitutive paradigm}
The second paradigm, due largely to Noll \cite{Nol58} and Wang \cite{Wan67}, takes for elemental object a \emph{constitutive relation}. 
The underlying manifold $\M$ has for role to set the topology of the body, and be a domain for the constitutive relation. 
In the case of a hyperelastic body, the constitutive relation takes the form of an \emph{energy density} $W:T^*\M\otimes\R^d\to\R$.
A constitutive relation is called \emph{uniform} if the energy density at every point $p\in\M$ is determined by an ``archetypical" function $\calW: \R^d\times\R^d\to\R$, along with a local frame field $E: \M\times\R^d\to T\M$, which specifies how $\calW$ is ``implanted" into $\M$. 
Once a uniform constitutive relation has been defined, 
its pointwise symmetries and its dependence on position may define a so-called \emph{material connection} $\nabla$ along with an \emph{intrinsic Riemannian metric} $\g$ (described in detail in Section~\ref{sec:NW}).

At this point, it is interesting to note Wang's own reflections comparing the geometric approach (in our language) to his \cite{Wan67}:
\begin{quotation}
It is not possible to make any precise comparison, however, since the physical literature on dislocation theory rarely if ever introduces definite constitutive equations, resting content with heuristic discussions of the body manifolds and seldom taking up the response of bodies to deformation and loading, which is the foundation stone of modern continuum mechanics.
\end{quotation}
Indeed, in the geometric paradigm, the constitutive relation typically does not appear explicitly.
However, the geometric and the constitutive paradigms are consistent with each other.
On the one hand, {as shown by Wang, a constitutive relation subject to a uniformity property defines an intrinsic metric and a material connection (as will be shown below, the material connection is unique only if $\calW$ has a discrete symmetry group). On the other hand, a body manifold endowed with a notion of distant parallelism defines a uniform constitutive relation for every choice of archetypal function $\calW$ and implant map at a single point---once $\calW$ has been implanted at some $p\in \M$, the whole constitutive relation is determined by parallel transporting this implant to any other point in $\M$ according to $\nabla$;  by construction, $\nabla$ is a material connection of that constitutive relation.
 
This is the viewpoint that we take in this paper, and the one through which we show how homogenization processes in both paradigms are also equivalent with each other (see below).
However, Wang's comment above is not unfounded: first, in the case of an archetype with a continuous symmetry group (say, isotropic), there is more than one material connection associated with the constitutive relation, hence from the constitutive point of view it does not make sense to talk about a single parallelism (or Weitzenb\"ock manifold) that represents the body.
Second, in certain cases in which the geometric viewpoint assumes a posteriori a constitutive response, the parallelism, or the torsion tensor associated with it, are eventually considered as variables in the constitutive relation \cite{Kro96}, resulting in so-called {\em coupled stresses} \cite{Kro63b}.
This approach, in which the underlying geometric structure can change, e.g., due to loading, is beyond the scope of the constitutive paradigm (or at least, its time-independent version), and such models will not be considered in this work.

Finally, let us note that there are other approaches to dislocations not covered by the above discussion, which are beyond the scope of this paper.
In particular, we will not consider the line of works emanating from Davini \cite{Dav86}, and other more recent approaches such as \cite{Kat05,CK13}, although some of the consequences of the discussions here (e.g., continuous vs.~discrete symmetries) may also apply to them.

\subsection{Description of the main results}
The physical notion of dislocations is rooted in discrete structures, such as defective crystal lattices. Thus, when considering distributed dislocations, it is natural to consider a homogenization process, in which a continuous distribution of dislocations (according to a chosen paradigm) is obtained as a limit of finitely many dislocations, as those are getting denser in some appropriate sense.
A priori, each of the two paradigms could have its own homogenization theory: 
\begin{enumerate}
\item Geometric paradigm: Consider body manifolds representing solids with finitely-many (singular) dislocations, and study their  limit as the number of dislocations tends to infinity.
\item Constitutive paradigm: Consider constitutive relations modeling solids with finitely-many (singular) dislocations, and study their  limit as the number of dislocations tends to infinity.
\end{enumerate}
The first task belongs to the realm of geometric analysis, and has been addressed in \cite{KM15,KM15b}, where it was shown that any two-dimensional Weitzenb\"ock manifold can be obtained as a limit of bodies with finitely-many dislocations (see Section~\ref{sec:homogenization_geo} for a precise statement).
The second task belongs, for hyperelastic bodies, to the realm of the calculus of variations, and has been addressed in \cite{KM16} for the special case of isotropic materials.

In this paper, we review the main results of these papers and extend the analysis of \cite{KM16} to the non-isotropic case. More importantly,
we show that the homogenization theories resulting from the geometric and the constitutive paradigms are consistent, and even identical in the case of constitutive relations having discrete symmetries. In particular, both predict the emergence of (the same) torsion as a limit of distributed dislocations. 

Our main result in this chapter can be summarized as follows:
\begin{theorem}[Equivalence of homogenization processes, informal]\
\label{thm:loop}
\begin{enumerate}
\item For a body manifold $(\M,\g,\nabla)$ with finitely many dislocations, there is a natural way to define 
a constitutive relation $(\M,W)$ based on a given archetype $\calW$, for which $\nabla$ is a material connection and $\g$ is an intrinsic metric (Proposition~\ref{pn:associated_energy}).
\item
If the archetype $\calW$ has a discrete symmetry group, then this relation is bijective; i.e., a constitutive relation $(\M,W)$ defines a unique material connection $\nabla$ and a unique intrinsic metric $\g$ (Proposition~\ref{pn:discrete_symmetry_equivalence}).
\item If a sequence of body manifolds with $n$ dislocations $(\M_n,\g_n,\nablaLC_n)$ converges (in the sense
of Theorem~\ref{thm:manifold_conv}) to a Weitzenb\"ock manifold $(\M,\g,\nabla)$, then the corresponding constitutive models $(\M,W_n)$ $\Gamma$-converge to a constitutive model $(\M,W)$, for which $\nabla$ is a material connection and $\g$ is an intrinsic metric (Theorem~\ref{thm:Gamma_conv}).

\end{enumerate}
\end{theorem}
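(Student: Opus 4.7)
The proof would proceed in the three steps of the statement. For Part~(1), given $(\M,\g,\nabla)$ with finitely many dislocations, I would fix a base point $p_0$ away from the dislocation locus together with a $\g$-orthonormal basis of $T_{p_0}\M$, implant $\calW$ at $p_0$ via this basis, and extend to all of $\M$ by $\nabla$-parallel transport. Since $\nabla$ is flat on the complement of the (finite) dislocation set, this yields a well-defined smooth implant field $E$ (possibly after introducing cuts that are immaterial for defining a pointwise energy density). Setting $W(p,F) := \calW(F\,E(p))$ realizes $W$ as a uniform constitutive relation for which $\g$ is an intrinsic metric---because $E(p)$ is $\g$-orthonormal by construction---and $\nabla$ is a material connection, because $E$ is $\nabla$-parallel by construction. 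This is the content of Proposition~\ref{pn:associated_energy}.

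For Part~(2), suppose $\calW$ has discrete symmetry group $G \subset \GL(d)$. Any two implant fields $E_1, E_2$ realizing the same $W$ must satisfy $E_1(p)\,E_2(p)^{-1}\in G$ pointwise; continuity in $p$ together with discreteness of $G$ forces this element to be locally constant, hence constant on each connected component, so the implant field is unique up to a single element of $G$. The resulting frame field therefore determines both $\nabla$ (via its parallel transport) and $\g$ (as the unique metric making $E$ orthonormal), giving Proposition~\ref{pn:discrete_symmetry_equivalence}.

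For Part~(3), I would leverage the manifold convergence of Theorem~\ref{thm:manifold_conv} to pull the data back to a fixed domain, so that the frame fields $E_n$ associated with $(\M_n,\g_n,\nablaLC_n)$ by Part~(1) can be directly compared with the limiting frame field $E$ associated with $(\M,\g,\nabla)$. Manifold convergence should translate to strong $L^p_{\text{loc}}$ convergence $E_n\to E$ off shrinking neighborhoods of the dislocation cores. The liminf inequality then follows by standard lower semicontinuity applied to the representation $\int \calW(\nabla u_n \, E_n)\,\dVolgn$, combining (quasi-)convexity of $\calW$ with the pointwise convergence of $E_n$. The recovery sequence is obtained by taking the natural target $u$ on $\M$, pulling it back via the convergence maps, and truncating it in small neighborhoods of the singular dislocation lines of $\M_n$.

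The main obstacle is precisely this truncation step: away from the dislocation locus the frames converge nicely, but near the singular lines the strain may concentrate and the frames jump, so a capacity-type cutoff argument is required to show that the local energy contribution vanishes as the cores shrink. This is the technical heart of the isotropic case treated in \cite{KM16}; the non-isotropic extension demands tracking how $\calW$ responds to the \emph{non-constant} frame field $E_n$ rather than merely to the metric $\g_n$. It is exactly this tracking that forces the limiting energy to ``see'' the torsion of the Weitzenb\"ock connection $\nabla$ rather than only the Levi-Civita connection of $\g$, and thereby yields the desired consistency between the geometric and constitutive homogenizations.
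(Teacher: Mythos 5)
Your Parts~(1) and~(2) follow essentially the same route as the paper. Part~(1) is exactly the construction of Proposition~\ref{pn:associated_energy}: implant $\calW$ at one point via a $\g$-orthonormal basis and extend by $\nabla$-parallel transport, using trivial holonomy off the singular set. Part~(2) is the argument of Proposition~\ref{pn:discrete_symmetry} (continuity into a discrete group forces local constancy); note only that the group element comparing two implants acts on the right, so the correct relation is $E_1(p)^{-1}\circ E_2(p)\in\mathcal{G}$ rather than $E_1(p)\circ E_2(p)^{-1}\in\mathcal{G}$ (the latter lives in the conjugate group $\mathcal{G}_p$, so your conclusion survives, but the composition order matters when you quotient by a \emph{constant} symmetry). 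You should also note, as the paper does, that composing all implants with a fixed $g\in\mathcal{G}\le\SO(d)$ changes neither the induced metric nor the induced parallel transport, which is what makes $\g$ and $\nabla$ genuinely well defined.

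Part~(3) is where you diverge from the paper, and where your plan is both harder than necessary and incomplete. You assume the frame convergence supplied by Theorem~\ref{thm:manifold_conv} is only strong $L^p_{\mathrm{loc}}$ off shrinking neighborhoods of the cores, and you therefore defer the whole difficulty to an unexecuted ``capacity-type cutoff'' near the dislocation lines, both for the recovery sequence and implicitly for the liminf. The paper's division of labor is different: the delicate analysis near the cores is absorbed into the \emph{geometric} approximation theorem, by choosing the dislocation parameters $\theta=o(1)$, $d=o(1/n)$ so that \eqref{eq:frame_conv} holds in $L^\infty$, with $F_n$ Lipschitz homeomorphisms. Given that, the $\Gamma$-convergence proof is soft: a compactness/Urysohn reduction, a coercivity argument showing the limit functional is $+\infty$ off $W^{1,p}$, the trivial recovery sequence $f_n=f\circ F_n$ for the upper bound (no truncation at all, since $dF_n\circ E_n\to E$ and $(F_n)_\star\dVolgn\to\dVolg$ uniformly and the $p$-Lipschitz estimate \eqref{eq:p_Lipschitz} controls the error), and for the lower bound the same frame-freezing estimate followed by weak $W^{1,p}$ lower semicontinuity of the quasiconvex functional with the \emph{fixed} limiting frame $E$. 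Your route could in principle be made to work with only $L^p$ frame convergence plus uniform bounds (dominated convergence on the small-measure core neighborhoods), but as written the key step is named rather than proved, and it obscures the point that the $L^\infty$ statement of Theorem~\ref{thm:manifold_conv} is precisely what makes the constitutive homogenization a corollary rather than a second hard theorem. Your closing remark that the non-isotropic energy must track the frame $E_n$ itself, and hence the torsion of $\nabla$, is correct and is indeed the conceptual content of the result.
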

\begin{figure}
\begin{center}
\hfil
\begin{xy}
(60,0)*+{(\M,\g,\nabla)} = "M";%
(60,30)*+{W} = "W";%
(0,0)*+{(\M_n,\g_n,\nablaLC_n)} = "Mn";%
(0,30)*+{W_n} = "Wn";%
{\ar@{->}^{\text{Prop.~\ref{pn:associated_energy}}} "M"; "W"};%
{\ar@{->}_{\text{Prop.~\ref{pn:associated_energy}}} "Mn"; "Wn"};%
{\ar@{->}_{\text{Thm.~\ref{thm:manifold_conv}}} "Mn"; "M"};%
{\ar@{->}^{\Gamma-\text{limit}\,\, \text{(Thm.~\ref{thm:Gamma_conv})}} "Wn"; "W"};%
{\ar@{-->}@/_{2pc}/_{\text{discr. symm.}}"Wn";"Mn"};
{\ar@{-->}@/^{2pc}/^{\text{discr. symm.}}"W";"M"};

\end{xy}
\hfil
\end{center}
\caption{A sketch of the main result (Theorem~\ref{thm:loop}).}
\label{fig:loop}
\end{figure}
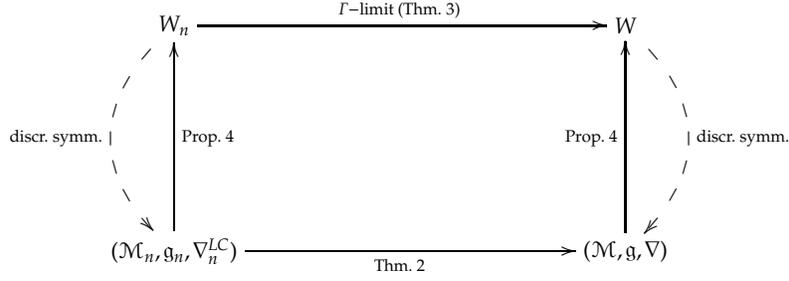
A sketch of Theorem~\ref{thm:loop} is shown in Figure~\ref{fig:loop}.

\medskip
In addition to Theorem~\ref{thm:loop}, this paper reviews the fundamental notions of the geometric and constitutive paradigms, and their above-mentioned equivalence;
we believe that the current presentation is original, and includes several results for which it is difficult (if at all possible) to find in the existing literature precise statements and proofs.

In the rest of this section, we elaborate on our main results.
We start by considering a defect-free body: in the geometric paradigm, such a body is modeled as a $d$-dimensional Riemannian manifold $(\M_0,\g_0)$, which can be embedded isometrically in Euclidean space $(\R^d,\euc)$, where $\euc$ is the standard Euclidean metric. 
Let $\nablaLC_0$ be the Levi-Civita connection of $\g_0$;  since $(\M_0,\g_0)$ is isometric to a Euclidean domain, the connection $\nablaLC_0$ is flat, and parallel transport is path-independent. 

To obtain a constitutive relation for that same body, one has to fix an archetype $\calW$ and a bijective linear map $(E_0)_p:\R^d\to T_p\M_0$ at some reference point $p\in\M_0$. The two together determine the mechanical response to deformation at $p$: for $A\in T^*_p\M_0\otimes\R^d$, the elastic energy density (per unit volume, where the reference volume is the volume form of $(\M_0,\g_0)$) at $p$ is
\[
(W_0)_p(A) = \calW(A\circ (E_0)_p).
\]
A constitutive relation is obtained by extending $(E_0)_p$ into a $\nablaLC_0$-parallel frame field $E_0:\M_0\times\R^d\to T\M_0$ (here is where the path-independence of the parallel transport is required). The elastic energy density  is
\beq
W_0(A) = \calW(A\circ E_0),
\label{eq:W_no_defects}
\eeq
and the elastic energy associated with a map $f:\M_0\to\R^d$ is 
\beq
I_0(f) = \int_{\M_0} W_0(df)\,\dVol_{\g_0},
\label{eq:energyI}
\eeq
where $\dVol_{\g_0}$ is the Riemannian volume form. As we show in Section~\ref{sec:equivalence}, the geometric and the constitutive paradigms are consistent: $\g_0$ is an \emph{intrinsic metric} for $W_0$ and $\nablaLC_0$ is a \emph{material connection} for $W_0$; 
moreover, $\nablaLC_0$ is the unique material connection for $W_0$, provided that $\calW$ has a discrete symmetry group.

Consider next a body with a single straight edge-dislocation. Since straight edge dislocations are in essence two-dimensional.  From the point of view of the geometric paradigm, the body manifold of a body with one edge-dislocation can be described by a Volterra cut-and-weld protocol \cite{Vol07}. There are numerous ways of implementing a Volterra protocol: two ways are depicted in Figure~\ref{fig:dislocation}.

\begin{figure}
\begin{center}
\begin{tikzpicture}[scale=1.2]
	\draw[black,fill=gray!10] (-2,-2) -- (2,-2) -- (2,2) -- (-2,2) -- cycle;
	\fill[white] ((2.1,0.5) -- (0.5,0.5) -- (0,0) -- (0.5,-0.5) -- (2.1,-0.5) -- cycle;
	\draw[black] ((1.94,0.5) -- (0.5,0.5) -- (0,0) -- (0.5,-0.5) -- (1.94,-0.5) ;
	\tkzDefPoint(1.94,-0.5){Cp};
	\tkzDefPoint(0.5,-0.5){Ep};
	\tkzDefPoint(0,0){F};
	\tkzDefPoint(0.5,0.5){E};
	\tkzDefPoint(1.94,0.5){C};
	\tkzDrawPoints(Cp,Ep,F,E,C);
	\tkzLabelPoint[left](F){$p$}
	\tkzLabelPoint[right=2pt](C){$q$}
	\tkzLabelPoint[right=2pt](Cp){$q'$}
	\tkzLabelPoint[above](E){$r$}
	\tkzLabelPoint[below](Ep){$r'$}
	\tkzText(-1.4,0){$\M_1$}
	\tkzDefPoint(2,0){Fp};
	\tkzMarkAngle[size=0.5cm](Ep,F,E);
	\tkzText(0.3,0.0){$\theta$}
	\tkzText(0.15,-0.35){$d$}
\end{tikzpicture}

\begin{tikzpicture}[scale = 0.8]
	\fill[color=gray!5] (-3,-3.2) -- (5.5,-3.2) -- (5.5,3.2) -- (-3,3.2) -- cycle;
	\tkzDefPoint(0,0){O}
	\tkzDefPoint(-3,0.5){A}
	\tkzDefPoint(-3,-0.5){B}
	\tkzDrawSegment(O,A);
	\tkzDrawSegment(O,B)
	\tkzDrawArc[dotted](O,B)(A)
	\tkzMarkSegment[mark=s||](O,A)
	\tkzMarkSegment[mark=s||](O,B)
	\tkzDrawPoint(O)
	\tkzLabelPoint[above](O){$p$}
	\tkzMarkAngle[size=2.5cm](A,O,B);

	\tkzText(-1.9,0.0){$\theta$}

	\tkzDefPoint(1,0){pm}
	\tkzDrawPoint(pm)
	\tkzLabelPoint[above](pm){$r$}
	\tkzDefPoint(5,0){dummy}
	\tkzDrawSegment[dashed](pm,dummy)
	\tkzText(1.4,0){\ding{36}}
	\tkzText(2.2,0){\ding{36}}
	
	\tkzDefPoint(2.5,0){O1}
	\tkzDrawPoint(O1)

	\tkzDefPoint(5.5,0.5){A1}
	\tkzDefPoint(5.5,-0.5){B1}
	\tkzDrawSegment(O1,A1);
	\tkzDrawSegment(O1,B1)
	\tkzDrawArc[dotted](O1,B1)(A1)
	\draw[->] (4.0,0.75) -- (2.0,0.5);
	\draw[->] (4.0,-0.75) -- (2.0,-0.5);
	\tkzLabelPoint[above](O1){$r$}

\end{tikzpicture}

\end{center}
\caption{Two equivalent cut-and-weld constructions generating a body manifold with a single edge-dislocation. Top: the segments $pr$ and $pr'$ are identified (i.e., glued) as well as the segments $rq$ and $r'q'$. $p$ and $r\sim r'$ are the only singular points in the manifold (each with conical singularity of the same magnitude and opposite sign). 
Bottom: a sector whose vertex is denoted by $p$ is removed from the plane and its outer boundaries are glued together, thus forming a cone. The same sector is then inserted into a straight cut along a ray whose endpoint is denoted by $r$.}
\label{fig:dislocation}
\end{figure}
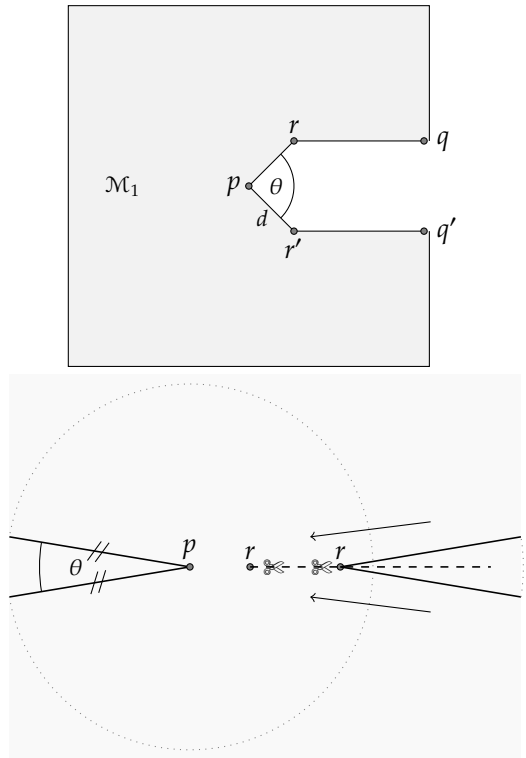

The outcome of this cut-and-weld protocol is a topological manifold $\M_1$, which is smooth everywhere except at two points (the points $p$ and $r\sim r'$ in Figure~\ref{fig:dislocation}). It is endowed with a metric $\g_1$, which is locally Euclidean,
since locally, every non-singular point has a neighborhood satisfying the above-mentioned defining properties of a defect-free body manifold. As there is no \emph{continuous} distribution of dislocations in this picture, the natural connection associated with this body is the Levi-Civita connection $\nablaLC_1$ of $(\M_1,\g_1)$.
Moreover, the parallel transport induced by $\nablaLC_1$ is path-independent for all closed paths that do not encircle only one of the two singular points. This restriction on admissible paths can be replaced by removing from the smooth part a segment connecting the two singular points. 
Note that the \emph{topological} manifold $\M_1$ is simply-connected, however its smooth component is not.  Despite being (almost everywhere) locally-Euclidean, it cannot be embedded in the Euclidean plane isometrically. 

The procedure for obtaining a constitutive relation within the constitutive paradigm follows the exact same lines as for a defect-free body. One has to fix an archetype $\calW$ and a frame at a point $(E_1)_p:\R^2\to T_p\M_1$; extending $(E_1)_p$ into a parallel frame field $E_1:\M_1\otimes\R^2\to T\M_1$, the elastic energy density $W_1: T^*\M_1\otimes\R^2\to\R$ is given by \eqref{eq:W_no_defects}, after changing the subscript $0$ to $1$. Once again, the two paradigms are consistent, as $\g_1$ and $\nablaLC_1$ are an intrinsic metric and a material connection for the energy density $W_1$. Note that none of the two pictures makes any explicit mention of torsion.

The generalization of this procedure to a body carrying $n$ singular edge dislocations follows the same lines, performing $n$ Volterra cut-and-weld protocols, thus obtaining a simply-connected topological manifold, which is smooth everywhere but at $n$ pairs of singular point. On the geometric side, one obtains a triple $(\M_n,\g_n,\nablaLC_n)$, where the Levi-Civita connection $\nablaLC_n$ has trivial holonomy, namely, its parallel transport is path-independent for all closed paths that do not encircle only one singular point within a pair. After the choice of an archetype $\calW$ and a frame at a point $(E_n)_p: \R^2\to T_p^*\M_n$, one obtains an energy density $W_n$, for which $\g_n$ and $\nablaLC_n$ are an intrinsic metric and a material connection.

Next consider the limit of $n\to\infty$. 
As proved in \cite{KM16}, every two-dimensional body manifold $(\M,\g,\nabla)$ admitting a global $\nabla$-parallel frame field is a limit of manifolds $(\M_n,\g_n\nablaLC_n)$ with finitely-many dislocations. A precise definition of this convergence is stated in Theorem~\ref{thm:manifold_conv}; loosely speaking, it means that $\M_n$ can be mapped into $\M$ such that orthonormal $\nablaLC_n$-parallel frame fields $E_n$ are mapped into a frame field asymptotically close to an orthonormal $\nabla$-parallel frame field. Note the {\em emergence of torsion}, as $\nablaLC_n$ is torsion-free for every $n$, whereas $\nabla$ has non-zero torsion. 

We then switch to the constitutive paradigm: as described above, each of the manifolds $(\M_n,\g_n\nablaLC_n)$ defines, upon the choice of an archetype $\calW$ and a frame at one point $(E_n)_p$, an energy density $W_n$,  and an associated energy $I_n$. 
In Theorem~\ref{thm:Gamma_conv}, we prove that as $n\to\infty$,  $I_n$ converges in the sense of $\Gamma$-convergence to a limiting functional $I$, which has an energy density $W$, where $W$ is the energy density obtained by the same construction using $\calW$ and $\nabla$.
In particular, $W$ has intrinsic metric $\g$ and material connection $\nabla$. 
This ``closes the circle", proving that the construction of a uniform energy density from a given body manifold can be extended from finitely-many to continuously-distributed dislocations.  

\subsection{Structure of this paper}

In the rest of the paper we formalize the above outline:
\begin{itemize}
\item In Section~\ref{sec:NW}, we present the main ingredients of the constitutive paradigm, following \cite{Wan67}, under the assumption of hyperelasticity. 
We use a more modern notation and some simplifying assumptions.

Furthermore, we show in Section~\ref{sec:NW} how the constitutive paradigm and the geometric paradigm for describing dislocations are consistent, and equivalent in the case of discrete symmetry group (Propositions~\ref{pn:associated_energy}-\ref{pn:discrete_symmetry_equivalence}), thus establishing the vertical arrows in Figure~\ref{fig:loop}.

\item In Section~\ref{sec:homogenization_geo}, we present in more detail the modeling of dislocations via the geometric paradigm, using the notion of Weitzenb\"ock manifolds.
In particular, we explain how Burgers vectors arise in this context, and their relation to the torsion tensor.

The main part of this section is an overview of recent results \cite{KM15,KM15b} concerning the homogenization of dislocations within this paradigm---a convergence of Weitzenb\"ock manifolds (Theorem~\ref{thm:manifold_conv}).
This establishes the lower horizontal arrow in Figure~\ref{fig:loop}.
For the sake of readability, we omit some of the technical details, and focus on the main ideas of the construction.

\item In Section~\ref{sec:homogenization}, we prove the convergence of the elastic energies associated with the converging Weitzenb\"ock manifolds; we show that they $\Gamma$-converge to the elastic energy associated with the limiting Weitzenb\"ock manifold (Theorem~\ref{thm:Gamma_conv}, Corollary~\ref{cor:loop}), thus establishing the upper horizontal arrow in Figure~\ref{fig:loop}, and concluding the proof of Theorem~\ref{thm:loop}. 

\item Finally, in Section~\ref{sec:appendix} we show explicitly how the torsion tensor appears in the equilibrium equations of elastic bodies with continuously-distributed dislocations according to the constitutive paradigm.
\end{itemize}

\section{The constitutive paradigm of Noll and Wang}
\label{sec:NW}

In this section we present some of the basic notions of the Noll-Wang approach. 
We generally follow \cite{Wan67}, although our presentation and some of the proofs are somewhat different.
For simplicity, we will assume a hyperelastic model.

\begin{definition}[Hyperelastic body]
\label{def:elastic_body_NW}
A hyperelastic body consists of a $d$-dimensional differentiable manifold, $\M$---the \emph{body manifold}---and an energy-density function (or \emph{constitutive relation}),
\[
W: T^*\M\otimes\R^d \to \R,
\]
which is viewed as a (nonlinear) bundle map over $\M$.
\end{definition}

For $p\in\M$ and $A\in T^*_p\M\otimes\R^d$, we denote the action of $W$ on $A$ by $W_p(A)$.
If $\xi$ is a section of $T^*\M\otimes\R^d$, then $W(\xi)$ is a function on $\M$.

\begin{remark}
In the terminology of Noll, such a body is called a \emph{simple body} since the constitutive relation at a point depends only on the local deformation (i.e., the first jet of the deformation) at that point.
\end{remark}

We will use the following notation: the groups $\GL(d)$, $\SO(d)$ are the standard subgroups of $\Hom(\R^d,\R^d)$; 
for two oriented inner-product spaces $(V,\g)$, $(W,\h)$ we will denote by $\SO(V,W)$ or $\SO(\g,\h)$ the set of orientation-preserving isometries $V\to W$, 
and by $\SO(V)$ the orientation-preserving isometries $V\to V$.

The next definition makes precise the notion of material uniformity, namely, a constitutive relation that is ``the same" at every point:

\begin{definition}[Material uniformity]
\label{def:uniformity_NW}
A hyperelastic body is called \emph{uniform} if for every $p\in \M$ there exists a frame, i.e., a linear isomorphism $E_p:\R^d\to T_p\M$ such that,
\beq
\label{eq:implant}
W_p(A) = \calW(A\circ E_p) \quad \text{for every } A\in T_p^*\M\otimes \R^d,
\eeq
for some 
\[
\calW : \R^d\otimes \R^d \to \R
\]
independent of $p$.
\end{definition}

\begin{remark}
More precisely, a hyperelastic energy density $W$ is a section of $(T^*\M\otimes\R^d)^*\otimes \wedge^dT^*\M$, i.e., for $A\in  T^*\M\otimes\R^d$, $W(A)$ is a $d$-form. Correspondingly, a body is uniform if there exists an archetype 
\[
\calW : \R^d\otimes \R^d\to\wedge^d\R^d
\]
such that $W_p = (E_p)^*\calW$. 
Since, eventually, we will only consider solid bodies with a given Riemannian volume form, it is more convenient to consider $W$ as a scalar density with respect to this volume form, and $\calW$ is a scalar density with respect to the canonical volume form in $\R^d$.
The given volume form then appears when considering the energy functional and not merely the scalar energy density, as in \eqref{eq:energyI} or Definition~\ref{def:I_tildeI}.
\end{remark}

Material uniformity is the weakest sense in which a constitutive relation is independent of position; it is defined independently of any coordinate system.  It is a type of what is sometimes called ``homogeneity" (though this term has another significance in \cite{Wan67}). The function $\calW$ is sometimes called an \emph{archetype}, whereas
the frame $E_p$ is sometimes called an \emph{implant map}, because it shows how the archetype $\calW$ is implanted into the material.
Note that for a given uniform constitutive relation, neither the archetype $\calW$ nor the implant map $E_p$ are unique.
If $(\calW,E_p)$ is an archetype-implant pair at $p\in\M$, then so is $(\calW',E_p\circ S)$, where $S\in\GL(d)$, and for every $B\in\Hom(\R^d,\R^d)$,
\[
\calW'(B) = \calW(B\circ S^{-1}).
\]
Moreover, the implant map may not be unique even for a fixed $\calW$, depending on the symmetries of $\calW$ (see below).

\begin{definition}[Smooth body]
A uniform hyperelastic body is called \emph{smooth} if there exists an archetype $\calW$, a cover of $\M$ with open sets $U^\alpha$, and implants $E^\alpha = \{E_p^\alpha\}_{p\in U^\alpha}$, such that the sections $E^\alpha$ are smooth. 
\end{definition}

\begin{example}
\label{ex:dist2}
Let $\g$ be a smooth Riemannian metric on $\M$, and consider the energy density
\beq
W(A) = \dist^2(A,\SO(\g,\euc)),
\label{eq:dist2}
\eeq
where $\SO(\g,\euc)$ at $p\in\M$ is the set of orientation-preserving isometries $T_p\M\to\R^d$, and the distance in $T^*_p\M\otimes\R^d$ is induced by the inner-product $\g_p$ on $T\M$ and the Euclidean inner-product $\euc$ on $\R^d$.
Then, any orthonormal frame $E_p\in \SO(\R^d,T_p\M)$ is an implant map, with archetype
\beq
\label{eq:dist2archtype}
\calW(\cdot) = \dist^2(\cdot, \SO(d)).
\eeq
This body is smooth, as we can choose locally smooth orthonormal frames.
Note that the implant map is non-unique, as it may be composed with any smooth section of $\SO(d)$ over $\M$.
This example illustrates why we do not require the existence of a global section $\{E_p\}_{p\in \M}$ in the definition of smoothness; such sections may not exist regardless of $W$, for example because of topological obstructions on $\M$ (e.g., if $\M$ is a sphere).
\end{example}

\begin{definition}[symmetry group]
\label{def:isotropy}
Let $\M$ be a uniform hyperelastic body. 
The \emph{symmetry group} of the body associated with an archetype $\calW$ is a group $\mathcal{G} \le \GL(d)$, defined by
\[
\calW(B\circ g) = \calW(B) \quad \text{for every $B\in \R^d\otimes \R^d$ and $g\in \mathcal{G}$}.
\]
The body is called a \emph{solid} if there exists a $\calW$ such that $\mathcal{G} \le \SO(d)$ (or sometimes if $\mathcal{G}\le O(d)$).
In this case, we shall only consider such $\calW$ as admissible, and call $\calW$ \emph{undistorted}.
\end{definition}

It is easy to see that if $\calW$ and $\calW'$ are archetypes for the same constitutive relation, then their symmetry groups $\mathcal{G}$ and $\mathcal{G}'$ are conjugate, i.e., there exists a $g\in\GL(d)$, such that $\mathcal{G}' = g^{-1}\mathcal{G} g$. Thus, a hyperelastic body is a solid if and only if it has an archetype $\calW$, whose symmetry group is conjugate to a subgroup of $\SO(d)$.

The intrinsic right-symmetry of the constitutive relation is determined by $W$ rather than by $\calW$. The symmetry group of $W$ is a point $p\in\M$ is a subgroup
\[
\mathcal{G}_p \le \GL(T_p\M).
\]
If $(\calW,E_p)$ is an archetype-implant pair at $p$, and $\mathcal{G}$ is the symmetry group of $\calW$, then for every $g\in\mathcal{G}$ and $A\in T^*_p\M\otimes\R^d$,
\[
W_p(A) = \calW(A\circ E_p) = \calW(A\circ E_pg) = W_p(A\circ E_pgE_p^{-1}),
\]
i.e.,
\[
\mathcal{G}_p = E_p \mathcal{G} E_p^{-1}.
\]
Consequently, the space of all implant maps that correspond to $\calW$ at $p$ is $E_p \mathcal{G}$.

\begin{example}
In Example~\ref{ex:dist2}, the symmetry group of $W$ at $p\in\M$ is $\SO(T_p\M)$ (where $T_p\M$ is endowed with the inner-product $\g_p$). 
$\calW$ is undistorted if and only the implant map $E_p$ at every $p\in\M$ satisfies 
\[
E_p^{-1} \SO(T_p\M) E_p = \SO(d).
\]
In particular, the archetype \eqref{eq:dist2archtype} is undistorted.
\end{example}

Thus far, we only considered point-symmetries of $W$ in the form of symmetry groups. We next consider symmetries of $W$ associated with pairs of points in the manifold:  

\begin{definition}[Material connection]
\label{def:material_connection}
A \emph{material connection} of $(\M,W)$ is an affine connection $\nabla$ on $\M$ whose parallel transport operator $\Pi$ leaves $W$ invariant.
That is, for every $p,q\in \M$, $A\in T^*_q\M\otimes \R^d$ and path $\gamma$ from $p$ to $q$,
\[
W_p(A\circ \Pi_\gamma) = W_q(A),
\]
where $\Pi_\gamma:T_p\M\to T_q\M$ is the parallel transport along $\gamma$.
\end{definition}

In general, a material connection may fail to exist (there may be topological obstructions), or may not be unique. The following proposition relates the uniqueness of a material connection to the nature of the symmetry group (a less general version of this result appears in \cite{Wan67}):

\begin{proposition}
\label{pn:discrete_symmetry}
Let $(\M,W)$ be a smooth uniform hyperelastic body with symmetry group $\mathcal{G}$.
If $\mathcal{G}$ is discrete, then there exists a unique locally-flat material connection.\footnote{Strictly speaking, the intrinsic condition is that $\mathcal{G}_p$ is discrete for some $p\in\M$ (and therefore for every $p\in\M$). By locally-flat, we mean that the curvature tensor vanishes; globally-flat implies also a trivial holonomy. Note that the term \emph{flat} has a different interpretation in \cite{Wan67}, where it describes a curvature- and torsion-free connection.}
\end{proposition}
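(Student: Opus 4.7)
The plan is to build the connection locally from the smooth implant frames, glue the local pieces using the discreteness of $\mathcal{G}$, and then prove uniqueness by applying the same discreteness principle to an arbitrary material connection. For existence, I use the smooth-body hypothesis to fix a common archetype $\calW$, a cover $\{U^\alpha\}$ of $\M$, and smooth implants $E^\alpha_p:\R^d\to T_p\M$. On each $U^\alpha$ I define $\nabla^\alpha$ by declaring the frame field $E^\alpha$ to be parallel; the mere existence of a local parallel frame forces the curvature tensor to vanish, so $\nabla^\alpha$ is automatically locally flat. Its parallel transport is $\Pi^\alpha_{pq}=E^\alpha_q\circ (E^\alpha_p)^{-1}$, and the computation
\[
W_p\brk{A\circ \Pi^\alpha_{pq}} = \calW\brk{A\circ E^\alpha_q\circ (E^\alpha_p)^{-1}\circ E^\alpha_p} = \calW(A\circ E^\alpha_q)=W_q(A)
\]
exhibits $\nabla^\alpha$ as a material connection on $U^\alpha$. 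On an overlap, $E^\alpha_p$ and $E^\beta_p$ are both implants of $\calW$ at $p$, so they are related by $E^\beta_p=E^\alpha_p\circ g_{\alpha\beta}(p)$ with $g_{\alpha\beta}$ valued in $\mathcal{G}$. Smoothness of the two frames gives smoothness of $g_{\alpha\beta}$, and since $\mathcal{G}$ is discrete, $g_{\alpha\beta}$ is locally constant; a constant change of frame does not change the connection, so the $\nabla^\alpha$ patch into a global locally flat material connection $\nabla$.

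For uniqueness, let $\tilde\nabla$ be any material connection, fix $p\in U^\alpha$, and choose a smooth family of paths $\gamma_q$ from $p$ to nearby points $q$ (e.g.\ radial lines in a chart). Smooth dependence of ODEs on parameters ensures that the $\tilde\nabla$-parallel transport $\tilde\Pi_q$ along $\gamma_q$ depends smoothly on $q$. Applying the material condition to an arbitrary $A\in T^*_q\M\otimes\R^d$ and substituting $B=A\circ E^\alpha_q$, we obtain
\[
\calW\brk{B\circ h(q)}=\calW(B),\qquad h(q):=(E^\alpha_q)^{-1}\circ \tilde\Pi_q\circ E^\alpha_p,
\]
valid for every $B\in\R^d\otimes\R^d$; hence $h(q)\in\mathcal{G}$. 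Since $h$ is smooth, takes values in the discrete group $\mathcal{G}$, and satisfies $h(p)=\mathrm{id}$, it is constantly $\mathrm{id}$ on a neighborhood of $p$. Therefore $\tilde\Pi_q=E^\alpha_q\circ (E^\alpha_p)^{-1}$ agrees with the parallel transport of $\nabla$ near $p$; since $p$ was arbitrary, $\tilde\nabla=\nabla$ globally.

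The main obstacle is the regularity bookkeeping inside the uniqueness step---specifically, verifying that $\tilde\Pi_q$ depends smoothly on its endpoint so that $h$ is a genuine smooth map into $\GL(d)$, and checking that the conclusion $h\equiv\mathrm{id}$ is independent of the chosen family of paths (any other family leads to a pointwise discrete discrepancy, again killed by local constancy starting from $\mathrm{id}$ at $p$). Everything else reduces to the slogan ``smooth $+$ discrete-valued $=$ locally constant,'' invoked once for the transition functions $g_{\alpha\beta}$ in the existence step and once for $h$ in the uniqueness step.
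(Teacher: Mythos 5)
Your proof is correct and follows essentially the same route as the paper: the existence step (local connections defined by declaring the smooth implant frames parallel, glued via $\mathcal{G}$-valued, hence locally constant, transition functions) is identical, and both uniqueness arguments rest on the same principle that a continuous map into the discrete group $\mathcal{G}$ that equals the identity at one point is identically the identity. The only difference is in how that principle is deployed for uniqueness: the paper compares two arbitrary material connections along a single curve, varying the time parameter $t$ (which needs only continuity in $t$ and directly yields equality of parallel transport along every curve, hence of the connections), whereas you compare an arbitrary material connection to the constructed one by varying the endpoint over a family of radial paths, which obliges you to invoke smooth dependence of ODE solutions on parameters and to note that the covariant derivative is recovered from parallel transport along one curve per tangent direction---both standard, so this is a cosmetic rather than substantive divergence.
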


\begin{proof}\smartqed
Assume two material connections, whose parallel transport operators are $\Pi^1$ and $\Pi^2$. Let $\gamma$ be a curve starting at $p\in \M$, and let $A\in T^*_{\gamma(t)}\M\otimes \R^d$ for some $t\ge0$.
Then,
\[
W_p(A\circ \Pi^1_{\gamma|_{[0,t]}} ) = W_{\gamma(t)}(A) = W_p(A\circ \Pi^2_{\gamma|_{[0,t]}}).
\]
Setting $A = B\circ (\Pi^1_{\gamma|_{[0,t]}})^{-1}$ for $B\in T^*_p\M\times\R^d$,
we obtain that
\[
W_p(B)  = W_p(B\circ (\Pi^1_{\gamma|_{[0,t]}})^{-1}\circ \Pi^2_{\gamma|_{[0,t]}}),
\]
hence
\[
(\Pi^1_{\gamma|_{[0,t]}})^{-1} \Pi^2_{\gamma|_{[0,t]}} \in \mathcal{G}_p
\]
for every $t$. Since the left-hand side is continuous in $t$ and $\mathcal{G}_p$ is a discrete group, $(\Pi^1_{\gamma|_{[0,t]}})^{-1} \Pi^2_{\gamma|_{[0,t]}}$ is constant. 
Since at $t=0$ it is the identity, 
\[
\Pi^1_{\gamma|_{[0,t]}} = \Pi^2_{\gamma|_{[0,t]}}
\]
for every $t$. Finally, since $\gamma$ is arbitrary,  $\Pi^1 = \Pi^2$.

We next prove existence of a locally-flat material connection.
Let  $\cup_\alpha U^\alpha = \M$ be a cover of $\M$, and let $\{E^\alpha_p\}_{p\in U^\alpha}$ be implant maps. 
For a curve $\gamma\subset U^\alpha$ starting at $p$ and ending at $q$, define
\beq
\label{eq:implants_define_connection}
\Pi_\gamma = E^\alpha_{q} \circ (E_{p}^\alpha)^{-1}.
\eeq
For a general curve $\gamma\subset\M$, partition it into curves $\gamma = \gamma_n * \ldots *\gamma_1$ (where $*$ is the concatenation operator), where each $\gamma_i \subset U^{\alpha_i}$ for some $\alpha_i$, and use the above definition.
In order to show that $\Pi_\gamma$ is well-defined, we need to show that this definition is independent of the concatenation.
To this end, it is enough to show that if $\gamma\subset U^\alpha \cap U^\beta$, then the definition of $\Pi_\gamma$ with respect to either $U^\alpha$ or $U^\beta$ is the same.

Indeed, consider the function of $p$,
\[
(E^\alpha_p)^{-1}E^\beta_p : U^\alpha\cap U^\beta \to \GL(R^d).
\] 
Since for any $A\in T^*_p\M\otimes \R^d$,
\[
\calW(A\circ E^\alpha_p)  = W(A) = \calW(A\circ E^\beta_p),
\]
it follows that $(E^\alpha_p)^{-1}E^\beta_p \in \mathcal{G}$ for any $p\in U^\alpha\cap U^\beta$.
Since $\mathcal{G}$ is discrete, it follows that this is a constant function of $p$, that is $(E^\alpha_p)^{-1}E^\beta_p = B\in \GL(\R^d)$ for every $p$.
We therefore have that for $p,q\in U^\alpha\cap U^\beta$,
\[
E^\alpha_{q}  (E_{p}^\alpha)^{-1} = E^\alpha_{q}  B  B^{-1} (E_{p}^\alpha)^{-1} = E^\alpha_{q} (E^\alpha_q)^{-1}E^\beta_q (E^\beta_p)^{-1}E^\alpha_p (E_{p}^\alpha)^{-1} = E^\beta_q (E^\beta_p)^{-1},
\]
and therefore $\Pi_\gamma$ is well defined.
Finally, for a closed curve $\gamma$, starting and ending at $p$, and contained in one of the domains $U^\alpha$, it follows from the definition that $\Pi_\gamma = \textup{Id}_{T_p\M}$, hence the holonomy of $\Pi$ is locally trivial, which implies that the curvature tensor of the connection associated with $\Pi$ is zero.
Note, however, that the holonomy of $\Pi$ may be non-trivial in general (for non-simply-connected manifolds).
\qed\end{proof}

Note that if there exists a global continuous implant section $\{E_p\}_{p\in \M}$ (for an archetype $\calW$), then the connection defined by \eqref{eq:implants_define_connection} (without the $\alpha$ superscript) is well-defined regardless of the symmetry group, and moreover, it is not only locally-flat, but has a trivial holonomy (that is, a path-independent parallel transport).
In fact, the existence of a material connection with a trivial holonomy is equivalent to the existence of a global implant section $\{E_p\}_{p\in \M}$.
Indeed, let $\nabla$ be such a connection, and let $E_{p_0}$ be an implant at $p_0\in \M$. then
\beq
\label{eq:connection_defines_implant}
E_p := \Pi_\gamma E_{p_0}
\eeq
is a global continuous implant section (here $\gamma$ is an arbitrary curve connecting $p_0$ and $p$).

In the case of a solid body, there is an additional intrinsic geometric construct associated with the body:

\begin{definition}[Intrinsic metric]
\label{def:intrinsic_metric}
Let $(\M,W)$ be a smooth solid body with an undistorted archetype $\calW$ and implant maps $\{E_p\}_{p\in\M}$.
The \emph{intrinsic Riemannian metric} of $\M$ associated with $\calW$ is defined by
\beq
\label{eq:implants_define_metric}
\g_p(X,Y) = \euc(E_p^{-1}(X), E_p^{-1}(Y)), \quad \text{ for every } X,Y\in T_p\M,
\eeq
where $\euc$ is the Euclidean inner-product in $\R^d$.
\end{definition}

This definition depends on $\calW$ (see Example~\ref{ex:intrinsic_metrics} below), but not on the choice of implants $E_p$.
Indeed, if $E_p$ and $E_p'$ are two implants at $p$, then, since $\M$ is a solid, $g=E_p^{-1}E_p'  \in \mathcal{G} \le \SO(d)$, and therefore
\[
\euc(E_p^{-1}(X), E_p^{-1}(Y)) = \euc\brk{g\circ {E_p'}^{-1} (X), g\circ {E_p'}^{-1} (Y)} = \euc({E_p'}^{-1} (X), {E_p'}^{-1}(Y)),
\]
where we used in the last step the $\SO(d)$ invariance of the Euclidean metric.
Note also that the existence of a Riemannian metric on $\M$ that is invariant under the action of $\mathcal{G}_p$ implies that $\M$ is solid \cite[Proposition~11.2]{Wan67}.

\begin{proposition}
\label{pn:metric_connection}
If $\nabla$ is a material connection and $\g$ is an intrinsic metric of a solid $\M$ with an archetype $\calW$, then $\nabla$ is metrically-consistent with $\g$ (equivalently, the induced parallel transport is an isometry).
\end{proposition}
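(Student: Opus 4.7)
The plan is to show that the parallel transport $\Pi_\gamma$ of $\nabla$ is an isometry between $(T_p\M, \g_p)$ and $(T_q\M, \g_q)$ for every path $\gamma$ from $p$ to $q$. The strategy is to exploit the fact that transporting an implant along $\gamma$ yields another implant at the endpoint, and then invoke the solid assumption to conclude that the ``difference'' between implants is a Euclidean isometry.

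\textbf{Step 1: Transport of implants are implants.} Fix implants $E_p$ at $p$ and $E_q$ at $q$ corresponding to the archetype $\calW$, and let $\tilde{E}_q := \Pi_\gamma \circ E_p : \R^d \to T_q\M$. I would first verify that $\tilde{E}_q$ is again an implant at $q$ for the archetype $\calW$. For any $A \in T^*_q\M \otimes \R^d$, the material connection property gives
\[
W_q(A) = W_p(A \circ \Pi_\gamma) = \calW\brk{(A \circ \Pi_\gamma) \circ E_p} = \calW\brk{A \circ \tilde{E}_q},
\]
where the second equality uses that $E_p$ is an implant for $\calW$. Thus $\tilde{E}_q$ satisfies \eqref{eq:implant} at $q$.

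\textbf{Step 2: Compare $\tilde{E}_q$ with $E_q$ via the symmetry group.} Since the body is solid and $\calW$ is undistorted, the symmetry group $\mathcal{G}$ of $\calW$ is a subgroup of $\SO(d)$. As remarked after Definition~\ref{def:isotropy}, the set of all implants at $q$ corresponding to $\calW$ is the coset $E_q \mathcal{G}$. Consequently there exists $g \in \mathcal{G} \le \SO(d)$ with $\tilde{E}_q = E_q \circ g$, equivalently $E_q^{-1} \circ \Pi_\gamma \circ E_p = g$.

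\textbf{Step 3: Conclude the isometry property.} For $X, Y \in T_p\M$, using Definition~\ref{def:intrinsic_metric} at $q$ and the identity from Step 2,
\[
\g_q(\Pi_\gamma X, \Pi_\gamma Y) = \euc\brk{E_q^{-1} \Pi_\gamma X,\, E_q^{-1} \Pi_\gamma Y} = \euc\brk{g \circ E_p^{-1} X,\, g \circ E_p^{-1} Y}.
\]
Since $g \in \SO(d)$ preserves $\euc$, the right-hand side equals $\euc(E_p^{-1} X, E_p^{-1} Y) = \g_p(X,Y)$, so $\Pi_\gamma$ is an isometry. Since $\gamma$ is arbitrary, the parallel transport of $\nabla$ preserves $\g$, i.e.\ $\nabla$ is metrically-consistent with $\g$.

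There is no real obstacle here; the only subtlety worth flagging is that Step 2 crucially uses the solid hypothesis (and the choice of $\calW$ as undistorted) in order to land inside $\SO(d)$ rather than merely $\GL(d)$. Without that, $g$ would preserve $\calW$ but not necessarily the Euclidean inner product, and the chain of equalities in Step 3 would break down.
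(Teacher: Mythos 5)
Your proof is correct and follows essentially the same route as the paper: the paper observes that $\Pi_\gamma^{-1}\circ E_q$ is an implant at $p$ for the same archetype and then invokes the (previously established) independence of the intrinsic metric from the choice of implant, which is exactly your Steps 1--3 with the transport taken in the opposite direction. Your explicit identification of the coset element $g\in\mathcal{G}\le\SO(d)$ just unpacks what the paper's one-line justification leaves implicit.
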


\begin{proof}\smartqed
Let $p,q\in \M$, and let $\gamma$ be a curve from $p$ to $q$.
Let $\Pi_\gamma$ the parallel transport of $\nabla$ along $\gamma$,  $X,Y\in T_p\M$, and let $E_q$ be an implant at $q$.
Then
\[
\g_q(\Pi_\gamma X,\Pi_\gamma Y) = \euc(E_q^{-1} \circ \Pi_\gamma X, E_q^{-1}\circ \Pi_\gamma Y) = \g_p(X,Y),
\]
where in the right-most equality we used the fact that $\Pi_\gamma^{-1}\circ E_q$ is an implant at $p$, for the same archetype $\calW$.
This equality shows that $\nabla$ is metrically consistent with $\g$.
\qed\end{proof} 

\begin{corollary}\cite[Proposition~11.6]{Wan67}
A solid body $(\M,W)$ is equipped with at most one torsion-free material connection, in which case it is the Levi-Civita connection of all intrinsic metrics of $(\M,W)$.
\end{corollary}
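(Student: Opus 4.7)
The plan is to combine Proposition~\ref{pn:metric_connection} with the uniqueness half of the fundamental theorem of Riemannian geometry. Since $(\M,W)$ is a solid, it admits at least one intrinsic metric $\g$, obtained from any choice of undistorted archetype $\calW$ via \eqref{eq:implants_define_metric}. By Proposition~\ref{pn:metric_connection}, every material connection on $(\M,W)$ is automatically metrically consistent with $\g$. Therefore a \emph{torsion-free} material connection $\nabla$ satisfies both of the characterizing properties of the Levi-Civita connection of $\g$, and so $\nabla = \nablaLC$ of $\g$ by the classical uniqueness statement.

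Uniqueness of the torsion-free material connection is then immediate: if $\nabla_1$ and $\nabla_2$ are both torsion-free material connections, fix any intrinsic metric $\g$ and apply the previous paragraph to get $\nabla_1 = \nablaLC_\g = \nabla_2$. For the ``in which case'' clause, note that the same argument can be run \emph{separately} against every intrinsic metric: given any other undistorted archetype $\calW'$ giving rise to an intrinsic metric $\g'$, Proposition~\ref{pn:metric_connection} gives metric consistency of the (unique) torsion-free material connection $\nabla$ with $\g'$, whence $\nabla = \nablaLC_{\g'}$ as well.

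I do not expect a genuine obstacle here: the corollary is essentially the conjunction of Proposition~\ref{pn:metric_connection} (metric compatibility of \emph{any} material connection with \emph{any} intrinsic metric) and the textbook uniqueness of the Levi-Civita connection. The only small point to be careful about is that different undistorted archetypes may yield different intrinsic metrics, and we need a single torsion-free material connection to be the Levi-Civita connection of all of them at once; this is handled uniformly because Proposition~\ref{pn:metric_connection} makes no distinction among archetypes or intrinsic metrics.
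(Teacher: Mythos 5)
Your argument is correct and is exactly the one the paper intends: the corollary is stated immediately after Proposition~\ref{pn:metric_connection} with no separate proof, precisely because it follows by combining that proposition (every material connection is metrically consistent with every intrinsic metric) with the uniqueness half of the fundamental theorem of Riemannian geometry. Your extra care about running the argument against each intrinsic metric separately is a nice touch but does not change the substance.
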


Proposition~\ref{pn:metric_connection} states that all material connections are metrically-consistent with every intrinsic metric.
In isotropic solids, i.e., solids whose symmetry group is $\SO(d)$, the converse is also true: every metrically-consistent connection is a material connection (note the strong contrast to the case of a discrete symmetry group, Proposition~\ref{pn:discrete_symmetry}):

\begin{proposition}
Let $(\M,W)$ be an isotropic solid and let $\nabla$ be a connection metrically-consistent with some intrinsic metric $\g$.
Then $\nabla$ is a material connection.
In particular, any isotropic solid admits a torsion-free connection---the Levi-Civita connection of any intrinsic metric.\footnote{This proposition is a more general version of \cite[Proposition~11.8]{Wan67}.}
\end{proposition}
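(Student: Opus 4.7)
The plan is to reduce the statement to the $\SO(d)$-invariance of the archetype $\calW$, using the fact that in a solid the implant maps can be taken to be isometries between $(\R^d,\euc)$ and the tangent spaces equipped with an intrinsic metric.

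First, I fix an undistorted archetype $\calW$ whose intrinsic metric coincides with the given $\g$ (equivalently, I select implants so that \eqref{eq:implants_define_metric} reproduces $\g$). For any $p\in\M$, the implant map $E_p:\R^d\to T_p\M$ is then, by definition of $\g$, an orientation-preserving isometry between $(\R^d,\euc)$ and $(T_p\M,\g_p)$. Next, I take an arbitrary curve $\gamma$ from $p$ to $q$ in $\M$, with parallel transport operator $\Pi_\gamma:T_p\M\to T_q\M$ associated to $\nabla$. Since $\nabla$ is metrically consistent with $\g$, $\Pi_\gamma$ is an isometry between $(T_p\M,\g_p)$ and $(T_q\M,\g_q)$, and being continuously connected to the identity along $\gamma$, it is orientation-preserving as well.

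Consider then the composition
\[
g := E_q^{-1}\circ \Pi_\gamma \circ E_p : \R^d \to \R^d.
\]
As a composition of three orientation-preserving isometries, $g\in\SO(d)$. Since the body is isotropic, the symmetry group of $\calW$ is $\SO(d)$, and therefore for every $A\in T^*_q\M\otimes\R^d$,
\[
W_p(A\circ \Pi_\gamma) = \calW(A\circ \Pi_\gamma \circ E_p) = \calW\brk{(A\circ E_q)\circ g} = \calW(A\circ E_q) = W_q(A).
\]
Since $\gamma$, $p$, $q$ and $A$ are arbitrary, $\nabla$ is a material connection in the sense of Definition~\ref{def:material_connection}. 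The ``in particular'' clause follows by applying this to the Levi-Civita connection $\nablaLC$ of $\g$, which is metrically consistent with $\g$ by construction and torsion-free.

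The only step that requires a touch of care is the orientation bookkeeping that turns $g$ into an element of $\SO(d)$ rather than $O(d)$; this is the sole place where the distinction between $\SO(d)$ and $O(d)$ in Definition~\ref{def:isotropy} matters, and it is handled by noting that parallel transport of a metric connection lies in the identity component of the isometry group, while implants are chosen orientation-preserving. No global implant section is needed—only implants at the two endpoints of $\gamma$—so there is no topological obstruction.
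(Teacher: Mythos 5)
Your proof is correct and follows essentially the same route as the paper's: both hinge on observing that $E_q^{-1}\circ\Pi_\gamma\circ E_p\in\SO(d)$ (implants are isometries for the intrinsic metric, parallel transport of a metrically-consistent connection is an isometry) and then invoking the $\SO(d)$-invariance of the undistorted archetype via the identical chain of equalities. Your added remark on orientation bookkeeping is a fair point that the paper handles implicitly by writing $\SO(\g_p,\g_q)$ and $\SO(\euc,\g_r)$.
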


\begin{proof}\smartqed
Let $\calW$ be an undistorted archetype and  let $E=\{E_p\}_{p\in \M}$ be an implant map (the proof below does not require any smoothness assumptions of $E$, and thus we can assume the existence of a global implant map without loss of generality). Suppose that $\g$ is an intrinsic metric for $W$, and let $\nabla$ be an affine connection metrically-consistent with $\g$; 
Since $(\M,W)$ is isotropic and $\calW$ is undistorted, we have (by definition) that its symmetry group is $\SO(d)$. 

Let now $\Pi_\gamma:T_p\M\to T_q\M$ be the parallel transport of $\nabla$ along a curve $\gamma$ from $p$ to $q$.
Since $\nabla$ is metrically-consistent with respect to $\g$, $\Pi_\gamma\in \SO(\g_p,\g_q)$.
Using the fact that for any $r\in\M$, $E_r\in \SO(\euc,\g_r)$ (by the very definition of an intrinsic metric), we have that $E_q^{-1} \circ \Pi_\gamma \circ E_p \in \SO(d)$.
Therefore, since $\calW$ is $\SO(d)$-invariant, we have that for any $A\in T^*_q\M\otimes \R^d$,
\[
W_p(A\circ \Pi_\gamma) = \calW(A\circ \Pi_\gamma \circ E_p) = \calW(A\circ E_q \circ (E_q^{-1}\circ \Pi_\gamma \circ E_p)) = \calW(A\circ E_q) = W_q(A).
\]
\qed\end{proof}

The fact that an isotropic solid always has a torsion-free material connection (or more generally, it has many material connections with different torsions) suggests that the equilibrium equations of such a body are independent of the torsion tensor.
Indeed, it can be shown explicitly (see Section~\ref{sec:appendix}) that $W$ only depends on the metric.

\begin{example}
\label{ex:intrinsic_metrics}
Consider once again Example~\ref{ex:dist2}.
Then $\g$ is an intrinsic metric, corresponding to the archetype
\[
\calW(B) = \dist^2(B, \SO(d)),
\]
and implants $E_p\in\SO(\euc,\g)$.
However, $c^2\g$, $c>0$, is also an intrinsic metric, corresponding to the archetype
\[
\calW(B) = \dist^2(cB, \SO(d))
\]
and implants $E_p\in c^{-1}\SO(\euc,\g)$.
It can be shown that there are no other intrinsic metrics in this case.
The phenomenon whereby the intrinsic metric is unique up to a multiplicative constant holds for every isotropic solid. 
\end{example}

\begin{remark}
In two dimensions, a solid archetype is either isotropic or it has a discrete symmetry; 
in three dimensions, a body can also be transversely-isotropic (see \cite[p.~60]{Wan67}).
In this case, the material connection is not unique, but the Levi-Civita connection of an intrinsic metric may not be a material connection.
More on transversely-isotropic materials can be found in \cite[Proposition~11.9]{Wan67} and \cite[Proposition~5]{EES90}.
\end{remark}

\subsection{Relation between geometric and constitutive paradigms}
\label{sec:equivalence}

As presented in the introduction,  a body with distributed dislocations is modeled in the geometric paradigm as a Weitzenb\"ock manifold $(\M,\g,\nabla)$, where $\nabla$ is curvature-free and metrically consistent with $\g$.
For simplicity, assume that $\nabla$ also has trivial holonomy (an assumption that often appears implicitly in this paradigm), hence the parallel-transport operator of $\nabla$ is path-independent (a property known as \emph{distant parallelism} or \emph{teleparalellism}).
We denote the parallel transport from $p$ to $q$ by $\Pi_p^q$.

To relate the geometric body manifold to the constitutive paradigm, assume a given undistorted solid archetype $\calW$ and an implant $E_{p_0}$, which is an orthonormal basis (with respect to $\g_{p_0}$) at some $p_0\in \M$. 
The pair $(\calW,E_{p_0})$ determines the mechanical response of the body at the point $p_0$.
Parallel transporting $E_{p_0}$ using \eqref{eq:connection_defines_implant}, we obtain a parallel frame field $\{E_p\}_{p\in \M}$,
which is orthonormal, since $\nabla$ is metrically-consistent with $\g$.

An implant field $E=\{E_p\}_{p\in \M}$ and an archetype $\calW$ define a unique energy density using \eqref{eq:implant}.
Note that this is the only energy density $W$ with a material connection $\nabla$ for which $\calW$ is an archetype with an implant $E_{p_0}$ at $p_0\in \M$.

We have thus proved the following:

\begin{proposition}
\label{pn:associated_energy}
Fix a solid (undistorted) archetype $\calW\in C(\R^d\times\R^d)$.
\begin{enumerate}
\item Given a Weitzenb\"ock manifold $(\M,\g,\nabla)$ with trivial holonomy and an orthonormal basis $E_p\in\SO(\euc,\g_p)$ at some $p\in \M$, there exists a unique energy density $W$, such that $\M$ is uniform with archetype $\calW$, and implant map $E_p$ at $p$, and such that $\g$ is an intrinsic metric and $\nabla$ is a material connection.
\item Moreover, all energy densities $W$ having an archetype $\calW$, an intrinsic metric $\g$ and a material connection $\nabla$ can be constructed this way.
In particular, $W$ is unique up to a global rotation---the choice of a basis at one point.
\end{enumerate}
\end{proposition}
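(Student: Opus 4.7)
The strategy is to build $W$ explicitly from the given data by parallel transport, and then verify that this construction both satisfies the required properties (giving existence) and exhausts all possibilities (giving the classification in part 2).

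For part 1, I would proceed as follows. Given the orthonormal basis $E_p$ at $p$, use formula~\eqref{eq:connection_defines_implant} to define a global frame field $\{E_q\}_{q\in\M}$ by $E_q = \Pi_p^q E_p$, where $\Pi_p^q$ is the (path-independent) parallel transport of $\nabla$. Since $\nabla$ is metrically consistent with $\g$, the parallel transport is an isometry, so $E_q \in \SO(\euc,\g_q)$ for every $q$. Now define
\[
W_q(A) := \calW(A\circ E_q) \quad \text{for every } A\in T_q^*\M\otimes\R^d.
\]
By construction, $(\M,W)$ is uniform with archetype $\calW$ and implant $E_p$ at $p$.

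The two geometric compatibilities are immediate from the construction. That $\g$ is an intrinsic metric follows directly from Definition~\ref{def:intrinsic_metric} applied to the implant field $\{E_q\}$, since each $E_q$ is orthonormal by the previous paragraph. That $\nabla$ is a material connection follows from the identity $E_r = \Pi_\gamma\circ E_q$ for any curve $\gamma$ from $q$ to $r$ (again by path-independence), which gives
\[
W_q(A\circ \Pi_\gamma) = \calW(A\circ \Pi_\gamma \circ E_q) = \calW(A\circ E_r) = W_r(A),
\]
as required by Definition~\ref{def:material_connection}. For uniqueness of $W$ given the pair $(\calW,E_p)$ and $\nabla$: if $W'$ is another such energy density, let $E'_p = E_p$ be the prescribed implant at $p$ and, using the fact that $\nabla$ is a material connection for $W'$, show that $E'_q := \Pi_p^q E_p$ is again a valid implant for $W'$ at $q$; then $W'_q(A) = \calW(A\circ E'_q) = W_q(A)$.

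For part 2, suppose $W$ has archetype $\calW$, intrinsic metric $\g$, and material connection $\nabla$. Pick any implant $\tilde E_p$ at $p$; because $\g$ is intrinsic and $\calW$ is undistorted, the space of implants at $p$ is $\tilde E_p\,\mathcal G$ with $\mathcal G \le \SO(d)$, so in particular every implant lies in $\SO(\euc,\g_p)$, and we may select one, call it $E_p$. Applying part 1 to $(\g,\nabla,E_p)$ yields an energy density which must coincide with $W$ by the uniqueness just proven. Finally, any two orthonormal bases at $p$ differ by an element of $\SO(d)$, so the construction is parametrized exactly by this global rotational freedom, yielding the stated non-uniqueness.

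The only genuine subtlety is verifying that parallel transport of an orthonormal implant stays an implant (not merely an orthonormal frame); this uses both metric compatibility of $\nabla$ with $\g$ and the fact that $\calW$ is solid and undistorted, so that orthonormality is precisely the condition singling out implants up to the symmetry group. Everything else is a bookkeeping of the definitions.
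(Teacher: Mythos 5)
Your proposal is correct and follows essentially the same route as the paper: parallel-transport the single implant $E_p$ along the trivial-holonomy connection to obtain a global orthonormal parallel implant field, define $W$ by \eqref{eq:implant}, and observe that the material-connection property forces any competing energy density with the same data to use the same transported implants, which gives both uniqueness and the classification in part 2. The paper presents this as a short construction in the text preceding the proposition rather than a formal proof, but the content is identical to yours.
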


A somewhat more intrinsic version of this proposition would be that a Weitznb\"ock manifold $(\M,\g,\nabla)$ and a response function $W_p$ at a single point, defines a unique energy density consistent with $\g$ and $\nabla$ (without the need to define $E_p$ and $\calW$).
However, since the same archetype $\calW$ can be implanted into different bodies (thus making sense of different bodies having "the same" response function), and since we are eventually interested in this paper in sequences of elastic bodies, it is useful to take $\calW$ as a basic building block, as done in Proposition~\ref{pn:associated_energy}.

In the case of a discrete symmetry group, the constitutive model $(\M,W)$ induces a unique geometric model $(\M,\g,\nabla)$;
this follows readily from Proposition~\ref{pn:discrete_symmetry}, and the discussion following Definition~\ref{def:intrinsic_metric}:

\begin{proposition}
\label{pn:discrete_symmetry_equivalence}
Let $(\M,W)$ be a uniform solid material with an undistorted archetype $\calW$ having a discrete symmetry group.
Then, the material connection $\nabla$ and the intrinsic metric $\g$ associated with $\calW$ are unique
(if the symmetry group is not discrete, $\g$ is still uniquely determined, however not $\nabla$).
\end{proposition}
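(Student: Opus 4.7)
The plan is to read off both uniqueness statements from results already in hand, treating them as corollaries rather than new arguments. The uniqueness of $\g$ requires no discreteness and is essentially the remark following Definition~\ref{def:intrinsic_metric}; the uniqueness of $\nabla$ is the uniqueness half of Proposition~\ref{pn:discrete_symmetry}.

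First, for the intrinsic metric: suppose $E_p$ and $E_p'$ are two implant maps at $p$ corresponding to the same undistorted archetype $\calW$. Then $g := E_p^{-1}\circ E_p'$ lies in the symmetry group $\mathcal{G}$, and since $\calW$ is undistorted we have $\mathcal{G}\le \SO(d)$, so $g$ preserves $\euc$. Substituting into \eqref{eq:implants_define_metric} shows that the intrinsic metric computed from $E_p$ coincides with that computed from $E_p'$, so $\g$ is determined by $\calW$ alone. This argument nowhere uses that $\mathcal{G}$ is discrete, which is precisely the parenthetical claim in the statement.

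Second, for the material connection: suppose $\nabla^1$ and $\nabla^2$ are both material connections, with parallel-transport operators $\Pi^1$ and $\Pi^2$. The uniqueness half of the proof of Proposition~\ref{pn:discrete_symmetry} applies verbatim: for any curve $\gamma$ starting at $p$, the composition $(\Pi^1_{\gamma|_{[0,t]}})^{-1}\circ \Pi^2_{\gamma|_{[0,t]}}$ lies in $\mathcal{G}_p$ (which is conjugate to $\mathcal{G}$ and hence discrete), depends continuously on $t$, and equals the identity at $t=0$. Discreteness then forces this product to be identically the identity, so $\Pi^1 = \Pi^2$ along every curve, and hence $\nabla^1 = \nabla^2$.

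There is really no obstacle here---this is a short bookkeeping corollary. The only point worth flagging is why the parenthetical is sharp: without discreteness the final continuity-plus-discreteness step collapses, and Example~\ref{ex:intrinsic_metrics} together with the proposition preceding it on isotropic solids exhibits exactly a situation with many material connections (of differing torsions) but still a unique intrinsic metric, as demanded.
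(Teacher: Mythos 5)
Your proposal is correct and follows exactly the paper's own route: the paper states that the result ``follows readily from Proposition~\ref{pn:discrete_symmetry}, and the discussion following Definition~\ref{def:intrinsic_metric},'' which is precisely the two-part bookkeeping you carry out (the $\SO(d)$-invariance argument for $\g$ and the continuity-plus-discreteness argument for $\Pi^1=\Pi^2$). Nothing is missing.
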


Another way of describing the relation between the geometric and constitutive paradigms is the following:
\begin{enumerate}[label=(\alph*)]
\item The triple $(\M,\calW,E)$, where $\calW:\R^d\otimes \R^d \to [0,\infty)$ and $E$ is a frame field, determines a uniform body $(\M,W)$ uniquely by \eqref{eq:implant}.
\item On the other hand, by declaring $E$ to be a parallel-orthonormal field, we obtain a Weitzenb\"ock manifold $(\M,\g,\nabla)$.
\end{enumerate}

In fact, $(\M,\calW,E)$ contains slightly more information than both $(\M,W)$ and $(\M,\g,\nabla)$:
given $\calW$, $(\M,\calW,E)$ can be derived from $(\M,\g,\nabla)$ uniquely, up to a global rotation (choice of $E_p$ at one point), and in the case of a discrete symmetry group, the same holds for deriving $(\M,\calW,E)$ from $(\M,W)$.

\section{Homogenization of dislocations: geometric paradigm}
\label{sec:homogenization_geo}

In this section we describe the results of \cite{KM15,KM15b}, showing how a smooth Weitzenb\"ock manifold $(\M,\g,\nabla)$, representing a body with continuously-distributed dislocations (the torsion tensor of $\nabla$ representing their density), can be obtained as a limit of bodies with finitely many dislocations.
These results are for two-dimensional bodies, hence we are only considering edge dislocations.

\paragraph{Bodies with finitely-many edge dislocations.}
To set the scene for the geometric homogenization of elastic bodies, we start by defining a two-dimensional body with finitely many (edge) dislocations.
As illustrated in Figure~\ref{fig:dislocation}, we view each dislocation as a pair of disclinations of opposite sign (a curvature dipole).

\begin{definition}
\label{def:body_w_dislocations}
A \emph{body with finitely-many singular edge dislocations} is a compact two-dimensional manifold with boundary $\M$, endowed with Riemannian metric $\g$, which is almost-everywhere smooth and locally-flat. The singularities are concentrated on a finite, even number of points, such that
\begin{enumerate}
\item The metric $\g$, restricted to a small enough neighborhood around a singular point, is a metric of a cone.
\item One can partition the singular points into pairs (\emph{curvature dipoles}), such that the geodesics connecting each pair (\emph{dislocation cores}) do not intersect.
\item The Levi-Civita connection $\nablaLC$, defined on the complement of those segments is path-independent.
\end{enumerate}
A body with finitely-many dislocations is a Weitzenb\"ock manifold $(\M,\g,\nablaLC)$,
and whenever we refer to a smooth field over $\M$ (say a frame field), it is understood as being smooth on complement of the dislocation cores. 
\end{definition}

The assumption on the Levi-Civita connection being path-independent, implies that the two cone defects in each pair (that is, the difference between $2\pi$ and the total angle around the cone) are of the same magnitude but of different signs.
That is, they are curvature dipoles.
In particular, the construction in Figure~\ref{fig:dislocation} yields a body with a single dislocation according to this definition.

Another approach for modeling bodies with finitely many dislocation was presented in \cite{ES14,ES14b}; 
instead of assuming a frame field describing lattice directions, one assumes a co-frame, that is, a family of $1$-forms (called {\em layering forms}).
This slightly different viewpoint enables the use of distributional $1$-forms---{\em de-Rham currents}---for describing the singular dislocations.
This viewpoint is quite close to the one presented here, although in some sense it requires less structure.
Recently, a homogenization result in this context has been proved \cite{KO19}, which is similar conceptually to the one presented here.
However, the notion of convergence used in \cite{KO19} is very weak compared to Theorem~\ref{thm:manifold_conv}, and therefore much more difficult to relate to the convergence of associated energy functionals, which is the main result of this paper.

\paragraph{Burgers circuits and vectors}
We now present in more detail how Burgers vectors appear in the context of Weissenb\"ock manifolds.
Let $\M$ be a manifold, endowed with a connection $\nabla$. 
A {\em Burgers circuit} is a closed curve $\gamma:[0,1]\to \M$, and its associated {\em Burgers vector} is defined by 
\[
\b_\gamma = \int_0^1 \Pi_{\gamma(t)}^{\gamma(0)} \dot{\gamma}(t) \,dt \in T_{\gamma(0)} \M,
\]
where $\Pi_{\gamma(t)}^{\gamma(0)}:T_{\gamma(t)}\M \to T_{\gamma(0)}\M$ is the parallel transport of $\nabla$ along $\gamma$ (see e.g., \cite[Sec.~4]{BBS55} or \cite[Sec.~10]{Wan67}).
Thus, as in the classical material science context, the Burgers vector is the sum of the tangents to the curve; in order to make sense of this on manifolds one has first to parallel transport all the tangent vectors to the same tangent space.

Burgers vectors are closely related to the {\em torsion tensor},
\[
T(X,Y) = \nabla_X Y - \nabla_Y X -[X,Y].
\]
The torsion $T$ is an infinitesimal Burgers vector in the following sense:
Let $p\in \M$ and let $\exp_p:T_p\M \to \M$ be a exponential map of $\nabla$.\footnote{Actually, any map $\phi:T_p\M\to \M$ with $\phi(0)=p$, whose differential at the origin is the identity will do.}
Let $\sigma_\e:[0,1]\to T_p\M$ be the parallelogram from the origin built from the vectors $\sqrt{\e}X, \sqrt{\e}Y$, and let $\gamma_\e = \exp_p(\sigma_\e)$ (see Figure~\ref{fig:torsion}).
Then
\[
\left.\frac{d}{d\e}\right|_{\e=0} \b_{\gamma_\e} = T(X,Y).
\]
This result is due to Cartan; see \cite[Chapter~III, Section~2]{Sch54} for a proof.

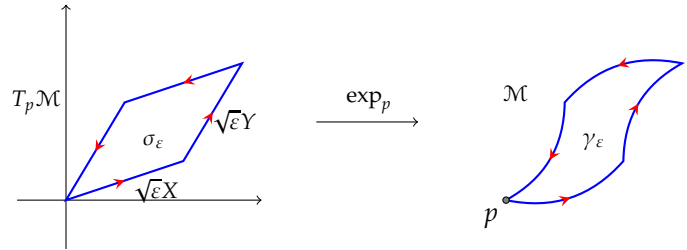
\begin{figure}
\begin{center}
\begin{tikzpicture}[scale=1.3]
	\draw[->] (-0.5,0) -- (2,0);
	\draw[->] (0,-0.5) -- (0,2);
	\tkzText(-0.3,1){ $T_p\M$}
	\path [thick, draw=blue,postaction={on each segment={mid arrow=red}}]
  	(0,0) -- (1.2,0.4)  -- (1.8,1.4)  -- (0.6,1) -- cycle;
 	\tkzText(0.9,0.6){ $\sigma_\e$}
 	\tkzText(0.9,0.1){ $\sqrt{\e}X$}
 	\tkzText(1.7,0.8){ $\sqrt{\e}Y$}
	
	\draw[->] (2.56,0.8) -- (3.6,0.8); 
	\tkzText(3.1,1){ $\exp_p$}
	
	\begin{scope}[xshift=4.5cm]
	\tkzText(0.1,1.1){ $\M$}
	\tkzDefPoint(0,0){p};
	\path [thick, draw=blue,postaction={on each segment={mid arrow=red}}]
	  (0,0) to [bend right] (1.2,0.4)  to [bend left] (1.8,1.4)  to [bend right] (0.6,1) to [bend left] cycle;
	\tkzDrawPoint(p);
	\tkzLabelPoint[below left](p){$p$};
	\tkzText(0.9,0.6){ $\gamma_\e$}
	\end{scope}

\end{tikzpicture}
\end{center}
\caption{The Burgers vector associated a loop $\gamma_\e$ in $\M$ which is the image under the exponential map of a parallelogram $\sigma_\e$ in $T_p\M$ with edges $\sqrt{\e}X$ and $\sqrt{\e}Y$  tends asymptotically to $\e\, T(X,Y)$. }
\label{fig:torsion}
\end{figure}

In the case of a body with finitely many dislocations $(\M,\g,\nablaLC)$ (according to Definition~\ref{def:body_w_dislocations}), the Burgers vector for any curve that does not encircle one of the dislocation cores is zero. This follows from the fact that every simply-connected submanifold of $\M$ which does not contain dislocations is isometrically embeddable into Euclidean plane, and that the Burgers vector of any closed curve in the plane is zero.
To quantify the Burgers vector associated with a curve encircling a single dislocation, consider the manifold depicted in Figure~\ref{fig:dislocation}.
One can then see that the magnitude of the Burgers vector is 
\beq
\label{eq:Burgers_vector_magnitude}
\textup{b} = 2d \sin(\theta/2),
\eeq
where $d$ is the length of the dislocation core (the distance between the two singular points forming the curvature dipole), and $\theta$ is the magnitude of the cone defect (see Figure~\ref{fig:Burgers}).
For a general Burgers circuit, the Burgers vector is the sum of the contributions of the dislocation cores it encircles (after parallel transporting each contribution to the base point).

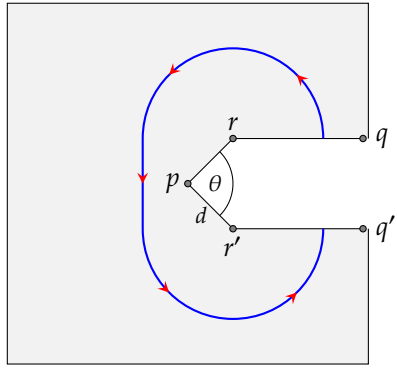
\begin{figure}
\begin{center}
\begin{tikzpicture}[scale=1.2]
	\draw[black,fill=gray!10] (-2,-2) -- (2,-2) -- (2,2) -- (-2,2) -- cycle;
	\fill[white] ((2.1,0.5) -- (0.5,0.5) -- (0,0) -- (0.5,-0.5) -- (2.1,-0.5) -- cycle;
	\draw[black] ((1.94,0.5) -- (0.5,0.5) -- (0,0) -- (0.5,-0.5) -- (1.94,-0.5) ;
	\tkzDefPoint(1.94,-0.5){Cp};
	\tkzDefPoint(0.5,-0.5){Ep};
	\tkzDefPoint(0,0){F};
	\tkzDefPoint(0.5,0.5){E};
	\tkzDefPoint(1.94,0.5){C};
	\tkzDrawPoints(Cp,Ep,F,E,C);
	\tkzLabelPoint[left](F){$p$}
	\tkzLabelPoint[right=2pt](C){$q$}
	\tkzLabelPoint[right=2pt](Cp){$q'$}
	\tkzLabelPoint[above](E){$r$}
	\tkzLabelPoint[below](Ep){$r'$}
	\tkzDefPoint(2,0){Fp};
	\tkzMarkAngle[size=0.5cm](Ep,F,E);
	\tkzText(0.3,0.0){$\theta$}
	\tkzText(0.15,-0.35){$d$}
	\path [thick, draw=blue,postaction={on each segment={mid arrow=red}}]
  	(1.5,0.5)  arc (0:180:1.)
	(-0.5,0.5) -- (-0.5,-0.5)
	(-0.5,-0.5)  arc (180:360:1.);
\end{tikzpicture}
\end{center}
\caption{A Burgers circuit yielding a Burgers vector whose magnitude is $2d \sin(\theta/2)$, where $\theta$ is the disclination angle and $d$ is the distance between two disclinations forming the edge-dislocation. The vector points downwards from a chosen base-point of the circuit.}
\label{fig:Burgers}
\end{figure}

It follows that by changing $d$ and $\theta$ in Figure~\ref{fig:dislocation}, while keeping $\textup{b} = 2d \sin(\theta/2)$ fixed, we can obtain "the same" dislocation in different ways, in the sense that a Burgers circuit around the dislocation core will not be able to distinguish between the two.
Nevertheless, the choice of $d$ and $\theta$ will be important from the viewpoint of convergence of bodies with dislocations, as depicted in the sketch of the proof below.

\paragraph{Main result: convergence in the geometric paradigm.}
We now describe a version of the main theorem of \cite{KM15b}, stating that in the geometric paradigm, every two-dimensional body with distributed dislocations is a limit of bodies with finitely many dislocations.
\begin{theorem}[Homogenization of dislocations, geometric paradigm]
\label{thm:manifold_conv}
For every compact two-dimensional Weitzenb\"ock manifold $(\M,\g,\nabla)$ and parallel orthonormal frame $E$,
there exists a sequence of bodies with finitely-many dislocations $(\M_n,\g_n,\nablaLC_n)$ and parallel orthonormal frames $E_n$, such that there exist homeomorphisms $F_n:\M_n\to \M$, whose restrictions to the smooth part of $\M_n$ are smooth embeddings,  satisfying
\beq
\label{eq:frame_conv}
\| dF_n \circ E_n - E\|_{L^\infty} \to 0.
\eeq
\end{theorem}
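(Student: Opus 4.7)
The plan is to discretize the torsion of $\nabla$ by a fine array of singular edge dislocations, chosen so that each cell of a triangulation of $\M$ absorbs the local torsion two-form in the form of a Burgers vector. The setup would use the global $\nabla$-parallel coframe $\omega=(\omega^1,\omega^2)$ dual to $E$, which satisfies a structure equation $d\omega^i = T^i$ where $T^i$ encode the components of the torsion of $\nabla$ in the frame $E$ (here $\nabla E = 0$ kills the connection one-form in this frame). Integrating $\omega$ along paths furnishes ``almost-Euclidean'' coordinates on $\M$ whose multi-valuedness measures precisely the torsion; this multi-valuedness is what must be reproduced, up to a shrinking singular set, by each $\M_n$.

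To construct $\M_n$, I would fix a triangulation $\mathcal{T}_n$ of $\M$ of mesh size $h_n\to 0$. For each $\tau\in\mathcal{T}_n$, replace $\tau$ by a Euclidean triangle $\tilde\tau$ whose shape is obtained by integrating $\omega$ from a chosen vertex of $\tau$, and glue the resulting triangles along shared edges. Around each interior vertex $v$ the Euclidean pieces will fail to close up by a translational defect equal to the sum of $\b_\tau=\int_\tau T$ over triangles incident to $v$, and by an angular defect that vanishes because $\nabla$ is curvature-free. At each such $v$, insert a curvature dipole in the sense of Definition~\ref{def:body_w_dislocations}: a pair of opposite-sign disclinations separated by a core of length $d$, with disclination angle $\theta$ chosen so that $2d\sin(\theta/2)$ equals the magnitude of the prescribed Burgers vector via \eqref{eq:Burgers_vector_magnitude}, and with angles cancelling locally so that $\nablaLC_n$ has trivial holonomy on the complement of cores. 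Together with a piecewise-affine homeomorphism $F_n:\M_n\to\M$ sending each $\tilde\tau$ affinely onto $\tau$, this yields the required $(\M_n,\g_n,\nablaLC_n)$.

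The parallel frame $E_n$ is defined by $\nablaLC_n$-transporting a fixed orthonormal frame from one basepoint; trivial holonomy makes $E_n$ globally well defined, and on each $\tilde\tau$ it is literally constant. On any triangle, $dF_n\circ E_n$ agrees with $E$ at the chosen basepoint of $\tau$, and elsewhere inside $\tau$ differs from $E$ only by the discrepancy between $\nabla$-transport and Euclidean transport, which is $O(h_n\|T\|_{L^\infty})$. Crucially, the Burgers-vector prescription ensures that the adjustments made by the inserted dipoles absorb exactly the cumulative torsion cell by cell, so the frame errors do not accumulate across edges; summing gives $\|dF_n\circ E_n-E\|_{L^\infty}=O(h_n)\to 0$, which is the desired \eqref{eq:frame_conv}. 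The hardest part will be the combinatorial bookkeeping at interior vertices: one must simultaneously arrange that the inserted disclination angles cancel around each $v$ (to keep $\nablaLC_n$-holonomy trivial away from cores) and that the dipole orientations and core lengths $d\to 0$ together realize the prescribed translational Burgers vector, all while $d$ shrinks fast enough that the singular set becomes negligible in the limit. Verifying that this balancing is achievable for any mesh, together with handling the boundary $\pl\M$ and checking that $\g_n$ is genuinely locally flat and compatible with the gluing, forms the bulk of the technical construction already carried out in \cite{KM15,KM15b}.
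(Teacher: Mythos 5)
Your overall strategy---triangulate, develop each cell into the Euclidean plane using the parallel coframe, and reinsert the lost torsion as singular dislocations with prescribed Burgers vectors---is the same one the paper follows. However, two of your specific choices create genuine problems.

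First, the bookkeeping of where the Burgers vectors go is wrong as stated. You place at each interior vertex $v$ a dipole whose Burgers vector is $\sum_{\tau\ni v}\int_\tau T$. Summing over all vertices, each triangle is then counted once per vertex, i.e.\ three times, so the total dislocation content of $\M_n$ is roughly $3\int_\M T$ rather than $\int_\M T$; a macroscopic Burgers circuit in $\M_n$ would pick up three times the Burgers vector of the corresponding circuit in $(\M,\nabla)$, which is incompatible with \eqref{eq:frame_conv} (uniform convergence of the frames forces convergence of Burgers vectors). The quantity $\sum_{\tau\ni v}\int_\tau T$ is the Burgers vector of the loop around the whole star of $v$, not the defect that must be repaired at $v$ itself. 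The paper sidesteps vertex defects altogether: it uses a $\nabla$-geodesic triangulation, notes that Gauss--Bonnet for the flat metric connection forces the angles of each geodesic triangle to sum to $\pi$, and replaces each triangle by a locally flat ``dislocated triangle'' with the \emph{same} edge lengths \emph{and} the \emph{same} angles, carrying a single interior dislocation whose Burgers vector is that triangle's own perimeter defect, of magnitude $O(1/n^2)$. Matching the angles makes the total angle at every vertex exactly $2\pi$, so the vertices are smooth points and each triangle's torsion is counted exactly once.

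Second, you have not addressed the only genuinely delicate part of \eqref{eq:frame_conv}: the behavior of $dF_n\circ E_n-E$ at the dislocation cores. A map sending each flat cell ``affinely'' onto its curved counterpart cannot be a homeomorphism near the cone points, and making the singular set small in measure only yields $L^p$ control, not $L^\infty$. The paper shows that for a generic dipole (fixed disclination angle $\theta\approx 1$, core length $d=O(1/n^2)$) one only obtains \emph{boundedness} of $dF_n\circ E_n-E$ near the core, hence $L^p$ but not $L^\infty$ convergence; to obtain the $L^\infty$ statement of the theorem one must realize each Burgers vector with $\theta=o(1)$ and $d=o(1/n)$. You note the freedom in choosing $(\theta,d)$ subject to $2d\sin(\theta/2)=\textup{b}$, but the constraint $\theta\to 0$ is precisely what makes the $L^\infty$ estimate work, and your proposal does not identify it.
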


Note that an orthonormal parallel frame $E$ contains all the geometric information of the Weitzenb\"ock manifold: 
since $E:\R^d\to T\M$ is orthonormal, it induces $\g$ by pushing forward the Euclidean metric on $\R^d$ (as in \eqref{eq:implants_define_metric}), and since it is parallel, it induced the parallel transport of $\nabla$ (see \eqref{eq:implants_define_connection}).
Therefore, the notion of convergence in Theorem~\ref{thm:manifold_conv}, which is defined through the convergence of orthonormal parallel frames, induces the convergence of the entire structure $(\M_n,\g_n,\nablaLC_n)\to (\M,\g,\nabla)$ of the Weitzenb\"ock manifolds.

We can also view Theorem~\ref{thm:manifold_conv} as a theorem about the convergence of manifolds endowed with frame fields $(\M_n,E_n)\to (\M,E)$, where each of the manifolds $(\M_n,E_n)$ induces the structure of a body with edge dislocations as in Definition~\ref{def:body_w_dislocations}.
This viewpoint, while maybe somewhat less natural from a geometric perspective, will be useful in the next section (convergence in the constitutive paradigm), when we associate these manifolds with a fixed archetype and consider $E_n$ and $E$ as implant maps.

\subsection{Sketch of proof of Theorem~\ref{thm:manifold_conv}}
Theorem~\ref{thm:manifold_conv} is an approximation result: given a manifold $(\M,\g,\nabla)$, we approximate it with a sequence of manifolds of a specific type (Definition~\ref{def:body_w_dislocations}).

\paragraph{Approximation by disclinations}
Before we describe the main idea of this approximation, it is illustrative to present a similar one, which is somewhat more intuitive---the approximation of a Riemannian surface by locally-flat surfaces with disclinations.
Given a surface $(\M,\g)$, we approximate it as follows:
\begin{enumerate}
\item First, assume that $\M$ does not have a boundary.
Take a geodesic triangulation of the manifold---a set of points in $\M$, connected by minimizing geodesics that do not intersect, such that the resulting partition $\M$ consists of geodesic triangles (such triangulations exist; see for example \cite[Note~3.4.5.3]{Ber02}).
If $\M$ has a boundary, triangulate a subdomain $\M'\subset \M$, such that the distance between $\pl\M'$ and $\pl\M$ is small (of the order of the distance between the vertices).
\item Construct a manifold by replacing each triangle with a Euclidean triangle with the same edge lengths.
Since $\M$ is (generally) not flat, the angles of the original geodesic triangles differ from the angles of their Euclidean counterparts (by the Gauss-Bonnet theorem, the angles of each geodesic triangle generally do not sum up to $\pi$).
\end{enumerate} 
This way we obtain a topological manifold which is smooth and flat everywhere but at the vertices, which are cone singularities (disclinations)---the angles around each vertex do not generally sum up to $2\pi$, since they differ from the angles of the original geodesic triangulation.
This approximation of the surface is similar to the approximation of a sphere by a football (soccer ball), using triangles rather than pentagons and hexagons.

By choosing finer and finer triangulations, say, triangulations in which the edge-lengths are of order $1/n$ for $n\gg 1$, it is clear (intuitively) that one obtains better and better approximations of the original manifold; they converge as metric spaces to the original manifold (see \cite{DVW15} for an explicit estimate) while the distribution-valued curvatures converge to the smooth curvature of $\g$ (see \cite{CMS84}).

\paragraph{The approximating sequence for Theorem~\ref{thm:manifold_conv}}
The idea behind the proof of Theorem~\ref{thm:manifold_conv} is very similar: Construct a fine geodesic triangulation of the Weitzenb\"ock manifold $(\M,\g,\nabla)$, and then replace each triangle with a locally-flat one to obtain a body with finitely many dislocations.
The difference between the two constructions is in the triangulation and in the type of  locally-flat replacements.
\begin{enumerate}
\item Take a triangulation of $(\M,\g,\nabla)$ in which the edges are $\nabla$-geodesics; those differ generally from the Levi-Civita geodesics and are not even locally length-minimizing. At the $n$th stage, we choose the triangulation such that the length of each edge is between (say) $1/n$ and $3/2n$, and all the angles are bounded between $\delta$ and $\pi -\delta$ for some $\delta>0$ independent of $n$ (to ensure that all the triangles are uniformly non-degenerate as $n\to \infty$).
The existence of a geodesic triangulation, based on a non-Levi-Civita connection, is not trivial; it is proved in \cite[Proposition~3.1]{KM15b}.
Denote the skeleton of this triangulation (the union of all the edges) by $X_n$.

\item Since the Gauss-Bonnet theorem holds for a metrically-consistent connection (see \cite[Theorem~B.1]{KM15b}), and since $\nabla$ is metrically-consistent and has zero curvature, the angles of each geodesic triangle sum up to $\pi$.
In other words, if a geodesic triangle has edge lengths $a,b,c$ and angles $\alpha,\beta,\gamma$, then $\alpha+\beta+\gamma=\pi$; 
the angles are however ``wrong" in the sense that generally $\alpha \ne \alpha_0$, $\beta\ne \beta_0$ and $\gamma \ne \gamma_0$, where $\alpha_0,\beta_0,\gamma_0$ are the angles of the Euclidean triangle having edge-lengths $a,b,c$. 
Since the geodesic triangles are uniformly regular, the angles do not deviate much from the angle of the Euclidean triangle,
\beq
\label{eq:angle_difference}
|\alpha - \alpha_0| , |\beta- \beta_0| , |\gamma-\gamma_0| = O(1/n).
\eeq
See \cite[Corollary~2.7]{KM15b}.\footnote{The estimate \eqref{eq:angle_difference} does not appear in this corollary explicitly;  it follows from its fourth part, using the fact a small triangle on $\M$ with edges that are Levi-Civita geodesics is, to leading order, Euclidean (this follows from standard triangle comparison results).}

\item As stated above, the Euclidean triangle having side lengths $a,b,c$ does not have angles $\alpha,\beta,\gamma$; 
	however, if Condition \eqref{eq:angle_difference} holds and $\alpha+\beta+\gamma=\pi$, then there exists a manifold containing a single dislocation (according to Definition~\ref{def:body_w_dislocations}), whose boundary is a triangle whose edge lengths and angles are $a,b,c$ and $\alpha,\beta,\gamma$ \cite[Proposition~3.3]{KM15b} (see Figure~\ref{fig:defective_triangle}).
The only additional parameter entering in this construction is the Burgers vector associated with the perimeter of the triangle, and whose magnitude is of order $O(1/n^2)$.
The precise location of the dislocation core inside the ``triangle" is arbitrary (as long as it does not intersect the boundary), 
as is the choice of the parameters $\theta$ and $d$ (see \eqref{eq:Burgers_vector_magnitude}).

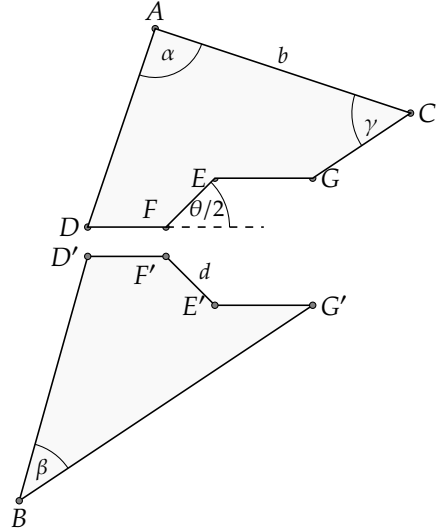
\begin{figure}
\begin{center}
\begin{tikzpicture}[scale=1.3]
	\tkzDefPoint(0,0){B};
	\tkzDefPoint(3,2){Cp};
	\tkzDefPoint(2.0,2){Ep};
	\tkzDefPoint(1.5,2.5){Fp};
	\tkzDefPoint(0.7,2.5){Dp};
	\tkzDrawPolygon[fill=gray!5](B,Cp,Ep,Fp,Dp);
	\tkzDrawPoints(B,Cp,Ep,Fp,Dp);
	\tkzDefPoint(0.7,2.8){D};
	\tkzDefPoint(1.5,2.8){F};
	\tkzDefPoint(2.5,2.8){FF};
	\tkzDefPoint(2,3.3){E};
	\tkzDefPoint(3,3.3){C};
	\tkzDefPoint(3+1,3.3+0.666){newC};
	\tkzDefPoint(2.7,3.3+0.666){tmp};
	\tkzDefPoint(1.19+0.2,4.25+0.580){A};
	\tkzDrawPoints(A,D,E,F,C,newC);
	\tkzDrawPolygon[fill=gray!5](A,newC,C,E,F,D);
	\tkzLabelSegment[above](A,newC){$b$}
	\tkzLabelPoint[below](B){$B$}
	\tkzLabelPoint[above](A){$A$}
	\tkzLabelPoint[left](D){$D$}
	\tkzLabelPoint[left](Dp){$D'$}
	\tkzLabelPoint[right](C){$G$}
	\tkzLabelPoint[right](newC){$C$}
	\tkzLabelPoint[right](Cp){$G'$}
	\tkzLabelPoint[above left=0.01](F){$F$}
	\tkzLabelPoint[below left=0.01](Fp){$F'$}
	\tkzLabelPoint[left](E){$E$}
	\tkzLabelPoint[left](Ep){$E'$}
	\tkzLabelAngle[pos=0.3](D,A,newC){$\alpha$}
	\tkzMarkAngle[size=0.5](D,A,newC)
	\tkzLabelAngle[pos=0.4](Cp,B,Dp){$\beta$}
	\tkzMarkAngle[size=0.6](Cp,B,Dp)
	\tkzText(3.6,3.82){ $\gamma$}
	\tkzMarkAngle[size=0.6](A,newC,C)
	\tkzDrawSegment[dashed](F,FF);
	\tkzMarkAngle[size=0.65](FF,F,E)
	\tkzLabelAngle[pos=0.45](FF,F,E){{\small $\theta/2$}}
	\tkzText(1.9,2.35){ $d$}
\end{tikzpicture}
\end{center}
\caption{A triangle containing a single edge-dislocation. Given angles $\alpha,\beta,\gamma$ adding up to $\pi$ and edge lengths $a,b,c$, we construct a defective triangle by identifying the edges $DF$ and $D'F'$, $FE$ and $F'E'$, and $EG$ and $E'G'$, such that $CG+G'B=a$, $AC=b$ and $AD+D'B=c$.}
\label{fig:defective_triangle}
\end{figure}

\item The approximation of $(\M,\g,\nabla)$ is obtained by replacing each triangle in the triangulation with a ``dislocated" triangle having the same edge-lengths and angles.
Denote the resulting manifold by $(\M_n,\g_n,\nablaLC_n)$, and the skeleton of the triangulation on $\M_n$ by $Y_n$.
Since the angles in each triangle in $Y_n$ are the same as in the corresponding  triangle in $X_n$, it follows that the angles around each vertex in $Y_n$ sum up to $2\pi$.
In other words, there are no cone defects (disclinations) at the vertices of the triangulation; 
the only singularities in $\M_n$ are the dislocation cores within each triangle.
Hence, $(\M_n,\g_n,\nablaLC_n)$ is a body with finitely-many dislocations according to Definition~\ref{def:body_w_dislocations}
(see Figure~\ref{fig:approx})
\end{enumerate}

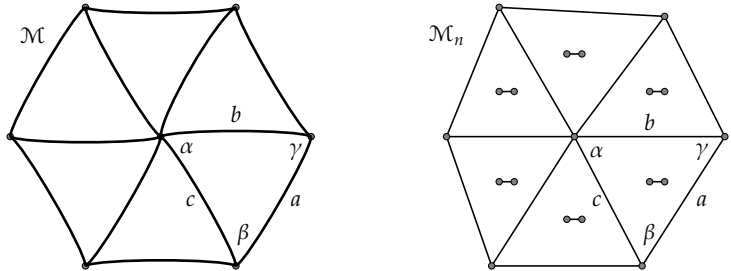
\begin{figure}
\begin{center}
\begin{tikzpicture}[scale=1]
	\tkzDefPoint(0,0){A};
	\tkzDefPoint(2,0){B};
	\tkzDefPoint(1,1.72){C};
	\tkzDefPoint(-1,1.72){D};
	\tkzDefPoint(-2,0){E};
	\tkzDefPoint(1,-1.72){F};
	\tkzDefPoint(-1,-1.72){G};
	\tkzDrawPoints(A,B,C,D,E,F,G);
	\draw [line width=1pt] (0,0) edge[bend left=30,distance=0.2cm] (2,0);
	\draw [line width=1pt] (0,0) edge[bend left=30,distance=0.3cm] (1,1.72);
	\draw [line width=1pt] (2,0) edge[bend left=30,distance=0.2cm] (1,1.72);
	\draw [line width=1pt] (1,1.72) edge[bend left=30,distance=0.2cm] (-1,1.72);
	\draw [line width=1pt] (0,0) edge[bend left=30,distance=0.2cm] (-1,1.72);
	\draw [line width=1pt] (0,0) edge[bend left=30,distance=0.2cm] (-2,0);
	\draw [line width=1pt] (-1,1.72) edge[bend right=30,distance=0.2cm] (-2,0);
	\draw [line width=1pt] (-1,-1.72) edge[bend right=30,distance=0.2cm] (-2,0);
	\draw [line width=1pt] (-1,-1.72) edge[bend right=30,distance=0.2cm] (0,0);
	\draw [line width=1pt] (1,-1.72) edge[bend right=30,distance=0.2cm] (0,0);
	\draw [line width=1pt] (1,-1.72) edge[bend right=30,distance=0.2cm] (2,0);
	\draw [line width=1pt] (1,-1.72) edge[bend right=30,distance=0.2cm] (-1,-1.72);
	\tkzText(-1.7,1.35){ $\M$}
	\tkzText(0.35,-0.15){ $\alpha$}
	\tkzText(1.1,-1.3){ $\beta$}
	\tkzText(1.8,-0.2){ $\gamma$}
	\tkzText(1,0.3){ $b$}
	\tkzText(1.8,-0.85){ $a$}
	\tkzText(0.4,-0.85){ $c$}

	\begin{scope}[xshift=5.5cm]
	\tkzDefPoint(0,0){A};
	\tkzDefPoint(2,0){B};
	\tkzDefPoint(1.2,1.6){C};
	\tkzDefPoint(-1,1.72){D};
	\tkzDefPoint(-1.7,0){E};
	\tkzDefPoint(-1.1,-1.72){F};
	\tkzDefPoint(0.9,-1.72){G};
	\tkzDrawSegments(A,B A,C A,D A,E A,F A,G B,C C,D D,E E,F F,G G,B);
	\tkzDrawPoints(A,B,C,D,E,F,G);
	\tkzDefPoint(1,0.6){B1};
	\tkzDefPoint(1.2,0.6){B2};
	\tkzDrawSegment(B1,B2);
	\tkzDrawPoints(B1,B2);
	\tkzDefPoint(-0.1,1.1){B1};
	\tkzDefPoint(0.1,1.1){B2};
	\tkzDrawSegment(B1,B2);
	\tkzDrawPoints(B1,B2);
	\tkzDefPoint(-1,0.6){B1};
	\tkzDefPoint(-0.8,0.6){B2};
	\tkzDrawSegment(B1,B2);
	\tkzDrawPoints(B1,B2);
	\tkzDefPoint(-1,-0.6){B1};
	\tkzDefPoint(-0.8,-0.6){B2};
	\tkzDrawSegment(B1,B2);
	\tkzDrawPoints(B1,B2);
	\tkzDefPoint(-0.1,-1.1){B1};
	\tkzDefPoint(0.1,-1.1){B2};
	\tkzDrawSegment(B1,B2);
	\tkzDrawPoints(B1,B2);
	\tkzDefPoint(1,-0.6){B1};
	\tkzDefPoint(1.2,-0.6){B2};
	\tkzDrawSegment(B1,B2);
	\tkzDrawPoints(B1,B2);
	\tkzText(-1.7,1.35){ $\M_n$}
	\tkzText(0.3,-0.2){ $\alpha$}
	\tkzText(1.,-1.3){ $\beta$}
	\tkzText(1.7,-0.2){ $\gamma$}
	\tkzText(1,0.2){ $b$}
	\tkzText(1.7,-0.85){ $a$}
	\tkzText(0.3,-0.85){ $c$}

	\end{scope}
\end{tikzpicture}
\end{center}
\caption{Approximating the smooth Weitzenb\"ock manifold $(\M,\g,\nabla)$ by manifolds $(\M_n,\g_n,\nablaLC_n)$ with singularities. Each $\nabla$-geodesic triangle in $(\M,\g,\nabla)$ is replaced by locally-Euclidean triangle, having the same angles and side lengths, and containing a single dislocation (the core of each dislocation is sketched here as a segment inside the triangle).}
\label{fig:approx}
\end{figure}

\paragraph{Convergence of the approximating sequence}
The next step is to show that $(\M_n,\g_n,\nablaLC_n)$ converges to $(\M,\g,\nabla)$ in the sense of Theorem~\ref{thm:manifold_conv}.
That is, show that given a $\nabla$-parallel orthonormal frame $E$ on $\M$, there exist $\nablaLC_n$-parallel orthonormal frames $E_n$ on $\M_n$ and maps $F_n:\M_n\to \M$ such that \eqref{eq:frame_conv} holds.

Given $E$, the construction of $E_n$ is very natural:
let $\{e_1,e_2\}$ be the standard basis of $\R^2$. 
Let $p_n\in \M$ be a vertex in $X_n$, the $n$th triangulation of $\M$, and let $q_n$ be its corresponding vertex in $Y_n$.
Each of the vectors $E_{p_n}(e_1), E_{p_n}(e_2)$ is a $\g$-unit vector in $T_{p_n}\M$, which is uniquely defined by its angles with the $\nabla$-geodesics in $X_n$ emanating from $p_n$.
Define $(E_n)_{q_n}(e_i)$ to be the $\g_n$-unit vector in $T_{q_n}\M_n$ which forms the same angles with the corresponding geodesics emanating from $q_n$.
This defines $E_n$ everywhere by $\nablaLC_n$-parallel transport.
Note that this relation between $E_{p_n}(e_i)$ and $E_{q_n}(e_i)$ actually holds for {\em any} vertex $p_n\in X_n\subset \M$ and corresponding vertex $q_n\in Y_n\subset\M_n$. 
This follows from the construction, since $X_n$ consists of $\nabla$-geodesics and $Y_n$ consists of $\nabla_n$-geodesics, and the angles in the corresponding triangles match.

The construction of $F_n$ is more subtle.
Since $X_n$ and $Y_n$ have the same graph structure, and the lengths of its corresponding edges are the same, there is a natural map between these skeletons (the isometry of their graph metric);
it is natural to define the restriction of $F_n$ to $Y_n$ to be this map.  
Next, note that at every corresponding pair of vertices $p_n\in \M$, $q_n\in \M_n$, the frame fields induce an isometry $A:= E_{p_n}\circ ((E_n)_{q_n})^{-1}: T_{q_n}\M_n\to T_{p_n}\M$.
Define, $F_n$ in a neighborhood of $q_n$ by
\[
F_n (q) := \exp^\nabla_{p_n} \brk{ A \circ (\exp^{\nabla_n}_{q_n})^{-1} (q)}.
\]
By construction, this map respects the mapping of $Y_n$ to $X_n$, and moreover, $d_{q_n}F_n$ maps $(E_n)_{q_n}$ to $E_{p_n}$, and hence $|dF_n\circ E_n - E|$ is small near $p_n$.
In \cite[Section~4]{KM15b}, it is proved that $F_n$ can be extended in this way to a map that satisfies $|dF_n\circ E_n - E| = O(1/n)$ uniformly everywhere outside a small neighborhood, of diameter $o(1/n)$, of the dislocation core. Note that \cite{KM15b} aims at a slightly different notion of convergence (compared to Theorem~\ref{thm:manifold_conv}), hence this statement is not explicit in \cite{KM15b}, however the proof of Proposition~4.3 in \cite{KM15b} yields this result.

It remains to analyze the vicinity of a dislocation core.
Recall that in the construction of $\M_n$, only the Burgers vector inside each triangle was taken into account.
For understanding the behavior of $F_n$ near the dislocation core, and only there, the exact construction of the dislocation plays a role: in \cite{KM15b}, a dislocation of magnitude $O(1/n^2)$ is built using an arbitrary, but fixed, dislocation angle $\theta \approx 1$, whereas the size of the dislocation core is $d=O(1/n^2)$.
In this case, extensions of $F_n$ to the dislocation core only satisfy that $|dF_n\circ E_n - E|$ is bounded near the core (an explicit construction can be seen in \cite[Section~3.2]{KM15}).
This only yields $L^p$ convergence in \eqref{eq:frame_conv}, for any $p<\infty$, but not $L^\infty$, which is enough for the version of Theorem~\ref{thm:manifold_conv} that appear in \cite{KM15b}, but not to Theorem~\ref{thm:manifold_conv} as stated here (which is needed for the next section).
If however one takes $\theta = o(1)$ and $d=o(1/n)$ (such that the dislocation magnitude \eqref{eq:Burgers_vector_magnitude} is as prescribed), $F_n$ can be extended to the dislocation core such that \eqref{eq:frame_conv} holds.\footnote{In \cite[Section~3.2]{KM15}, choosing $\theta = o(1)$, $d=o(1/n)$ implies, in the notation of \cite{KM15}, $n^{-1} \ll D \ll 1$, which then implies $L^\infty$ convergence (see the proof of \cite[Proposition~2]{KM15}).
The general case is very similar, since we are only considering minuscule pieces of the manifolds, in which the only geometry that plays a role is the structure of the singular points (everything else is uniformly close to the trivial Euclidean plane).
See also \cite[Section~2.3.2, Example~2]{KM16}.}

\section{Homogenization of dislocations: constitutive paradigm}
\label{sec:homogenization}

Our aim in this section is to prove a homogenization theorem for dislocations within the constitutive paradigm, thus proving the third and final part of Theorem~\ref{thm:loop}.
To this end, some assumptions about the archetype $\calW:\R^d\times \R^d \to [0,\infty)$ are required:
\begin{enumerate}
\item Growth conditions:
\beq
\label{eq:growth_conditions}
\alpha(-1 + |A|^p) \le \calW(A) \le \beta(1+|A|^p),
\eeq
for some $p\in (1,\infty)$ and $\alpha,\beta>0$.
\item Quasiconvexity:\footnote{The quasiconvexity assumption is natural from a variational point of view, as it guarantees the existence of an energy minimizer of the functional; see also Remark~\ref{rmk:qc}.}
\[
\calW(A) \le \int_{(0,1)^d} \calW(A + d\vp(x))\, dx \qquad \text{for every } \vp\in C_c^\infty ((0,1)^d, \R^d).
\]
\item Solid symmetry group: $\mathcal{G}(\calW) \le \SO(d)$.
\end{enumerate}

\begin{remark}
It is usual to assume that $\calW$ is frame-indifferent and that $\calW(A) = 0$ iff $A\in \SO(d)$, but both assumptions are not required for the theorem.
Moreover, quasiconvexity and \eqref{eq:growth_conditions}, implies that $\calW$ satisfies the $p$-Lipschitz property \cite[Proposition~2.32]{Dac08}:
\beq
\label{eq:p_Lipschitz}
|\calW(A)-\calW(B)| \le C(1+|A|^{p-1}+|B|^{p-1})|A-B|,
\eeq
for some $C>0$ (and in particular $\calW$ is continuous).
\end{remark}

\begin{example}
We describe now two simple examples of archetypes $\calW$ satisfying the above hypotheses---one isotropic and one having a discrete symmetry group:
\begin{enumerate}
\item The isotropic archetype $\calW_{\text{iso}}(A) = \dist^p(A,\SO(d))$ (as in Example~\ref{ex:dist2}) satisfies all the hypotheses but for quasi-convexity. This can be rectified by replacing $\calW_{\text{iso}}$ with its quasiconvex envelope $Q\calW_{\text{iso}}$, which is an isotropic archetype satisfying all of the hypotheses.
In two dimensions, we can write $Q\calW_{\text{iso}}$ explicitly for every $p\ge 2$ \cite{Sil01,Dol12}:
\[
Q\calW_{\text{iso}}(A) = 
\begin{cases}
\dist^p(A,\SO(d)) & \mu_1 + \mu_2 \ge 1 \\
(1 - 2\det A)^{p/2} & \mu_1 + \mu_2 \le 1,
\end{cases}
\]
where $\mu_1\ge |\mu_2| \ge 0$ are the signed singular values of $A$ (i.e., if $\sigma_1\ge \sigma_2 \ge0$ are the singular values, $\mu_1=\sigma_1$ and $\mu_2 = (\textup{sgn} \det A)\sigma_2$).
In higher dimensions, $Q\calW_{\text{iso}}$ is not known explicitly, however it is known that 
\[
c \calW_{\text{iso}} \le Q\calW_{\text{iso}} \le \calW_{\text{iso}}
\]
for some constant $c>0$ (see \cite[Proposition~10]{KM18}).

\item An example of an archetype having a discrete symmetry group is
\[
\calW_{\text{cubic}} (A) = \sum_{i=1}^d \beta_i \brk{ |Ae_i| - 1 }^2,
\]
where $\beta_i >0$ are parameters and $\{e_i\}$ is the standard basis of $\R^d$.
This energy density penalizes stretching along each of the lattice directions $e_i$.
Once again, this function is not quasi-convex, and its quasi-convex envelope is given by \cite[Lemma~4.1]{LO15}
\[
Q\calW_{\text{cubic}} (A) = \sum_{i=1}^d \beta_i \brk{ |Ae_i| - 1 }_+^2,
\]
where for $f\in\R$, $f_+$ denotes the maximum between $f$ and zero.
While $Q\calW_{\text{cubic}}$ satisfies all the assumptions, it is somewhat non-physical. For example, it does not penalize for compression (this is due to the fact that that $\calW_{\text{cubic}}$ is invariant under orientation reversal).
By adding to $\calW$ penalization for volume change (as in \cite{KM18}) or simply by considering $Q\calW_{\text{cubic}} + Q\calW_{\text{iso}}$ one obtains an archetype satisfying all the hypotheses and having a discrete symmetry group.
\end{enumerate}
\end{example} 

\begin{remark}
The assumption $\calW<\infty$ excludes physically-relevant archetypes in which $\calW(A)$ diverges as $A$ becomes singular (see, e.g., \cite[Theorem~4.10-2]{Cia88}).
The requirement $\calW<\infty$ is due to purely technical reasons that commonly appear in $\Gamma$-convergence results in elasticity when the elastic energy is $O(1)$.
\end{remark}

In the rest of this section, it is easier to consider that $\M$ is endowed with an orthonormal parallel frame field $E$ rather than a flat connection $\nabla$; as stated above, this is completely equivalent modulo a global rotation of $E$.

\begin{definition}
\label{def:I_tildeI}
Let $\calW$ be an archetype satisfying the above conditions.
Let $(\M,\g,E)$ be a Riemannian manifold with an orthonormal frame field $E$.
The elastic energy associated with $(\M,\g,E)$ and $\calW$ is
\[
I(f) = \int_\M \calW(df\circ E)\, \dVolg \qquad \text{$f\in W^{1,p}(\M;\R^d)$}.
\]
Note that $\g$ is an intrinsic metric for this energy, and that the connection $\nabla$, defined by declaring $E$ parallel, is a material connection.

As standard in these type of problems, we extend $I$ to $L^p(\M;\R^d)$ by 
\[
\tI(f) = 
\begin{cases}
\int_\M \calW(df\circ E)\, \dVolg  	& f\in W^{1,p}(\M;\R^d)\\
+\infty						& f\in L^p(\M;\R^d)\setminus W^{1,p}(\M;\R^d).
\end{cases}
\]
\end{definition}


In order to define convergence of the energy functionals, each defined on a different manifold $\M_n$, we need a notion of convergence of maps $f_n:\M_n\to \R^d$:

\begin{definition}
\label{def:convergence_of_maps}
$(\M,\g)$ be a Riemannian manifold, and let $\M_n$ be topological manifolds.
Let $F_n:\M_n\to \M$ be homeomorphisms.
We say that a sequence of maps $f_n:\M_n\to \R^d$ converges to a map $f:\M\to \R^d$ in $L^p$ if
\[
\|f_n\circ F_n^{-1} - f\|_{L^p(\M;\R^d)}\to 0.
\]
\end{definition}

\begin{theorem}[$\Gamma$-convergence of elastic energies]
\label{thm:Gamma_conv}
Let $\calW$ be an archetype satisfying the above assumptions.
Let $(\M,\g,E)$, $(\M_n,\g_n,E_n)$ be Riemannian manifolds with orthonormal frames.
Let $\tI$, $\tI_n$ be their associated elastic energies according to Definition~\ref{def:I_tildeI}.
If there exists Lipschitz homeomorphisms $F_n:\M_n\to \M$ such that
\beq
\label{eq:E_n_to_E}
\| dF_n \circ E_n - E\|_{L^\infty} \to 0,
\eeq
then $\tI_n\to \tI$ in the sense of $\Gamma$-convergence, relative to the convergence induced by $F_n$, as defined in Definition~\ref{def:convergence_of_maps} (note that for Lipschitz maps, $dF_n\in L^\infty(T\M_n,F_n^*T\M)$, hence the convergence is well-defined).
\end{theorem}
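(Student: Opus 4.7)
The plan is to pull everything back to the common manifold $\M$ and then verify the liminf and limsup inequalities separately, using \eqref{eq:E_n_to_E} as the key stability hypothesis on the geometric data. Define the pushed-forward frame $\tilde{E}_n := (dF_n \circ E_n) \circ F_n^{-1}$ and the pushed-forward volume form $\tilde{v}_n := (F_n)_* \dVolgn$ on $\M$. Since $E_n$ and $E$ are orthonormal with respect to $\g_n$ and $\g$ respectively, the $L^\infty$ convergence $\tilde{E}_n \to E$ forces $(F_n)_*\g_n \to \g$, and hence $\tilde{v}_n/\dVolg \to 1$ in $L^\infty$; in particular $F_n$ is bi-Lipschitz for $n$ large. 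Setting $\tilde{f}_n := f_n \circ F_n^{-1}$, the change of variables formula gives
\begin{equation}
\label{eq:pulled_back_energy_plan}
\tI_n(f_n) = \int_\M \calW(d\tilde{f}_n \circ \tilde{E}_n) \, \tilde{v}_n,
\end{equation}
and the convergence of Definition~\ref{def:convergence_of_maps} reduces to $\tilde{f}_n \to f$ in $L^p(\M;\R^d)$.

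For the liminf inequality, assume $f_n \to f$ with $\tI_n(f_n)$ bounded. The coercivity half of \eqref{eq:growth_conditions}, together with the uniform bounds on $\tilde{E}_n$, $\tilde{E}_n^{-1}$ and $\tilde{v}_n$, implies $\|d\tilde{f}_n\|_{L^p(\M)} \le C$, so $\tilde{f}_n \rightharpoonup f$ weakly in $W^{1,p}(\M;\R^d)$. The heart of the argument is then a two-stage replacement in \eqref{eq:pulled_back_energy_plan}: by the $p$-Lipschitz estimate \eqref{eq:p_Lipschitz} combined with H\"older's inequality, replacing $\tilde{E}_n$ by $E$ inside $\calW$ produces an error controlled by $\|\tilde{E}_n - E\|_{L^\infty}(1 + \|d\tilde{f}_n\|_{L^p}^p) = o(1)$; likewise, replacing $\tilde{v}_n$ by $\dVolg$ produces an error controlled by $\|\tilde{v}_n - \dVolg\|_{L^\infty}$ times the $L^1$-norm of the integrand, which is also $o(1)$. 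After these replacements, the liminf inequality for the fixed functional $g \mapsto \int_\M \calW(dg \circ E)\, \dVolg$ follows from the classical quasiconvex lower-semicontinuity theorem (cf.~\cite[Thm.~8.1]{Dac08}).

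For the recovery sequence the construction is essentially trivial: given $f \in W^{1,p}(\M;\R^d)$, take $f_n := f \circ F_n$, so that $\tilde{f}_n = f$ for every $n$ and $f_n \to f$ by definition. Then \eqref{eq:pulled_back_energy_plan} becomes $\int_\M \calW(df \circ \tilde{E}_n)\, \tilde{v}_n$, and the $L^\infty$ convergences $\tilde{E}_n \to E$ and $\tilde{v}_n \to \dVolg$, together with the integrable majorant $C(1+|df|^p)$ supplied by \eqref{eq:growth_conditions} and \eqref{eq:p_Lipschitz}, yield via dominated convergence that $\tI_n(f_n) \to \tI(f)$, establishing the $\limsup$ inequality.

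The main obstacle is the liminf step, precisely because $\tI_n$ does not arise as the restriction of a single fixed functional on $W^{1,p}(\M;\R^d)$: both the frame $\tilde{E}_n$ appearing inside $\calW$ and the volume form $\tilde{v}_n$ vary with $n$, so standard lower-semicontinuity theorems do not apply directly. The replacement argument reduces the problem to the classical fixed-integrand setting, and this is also where the $L^\infty$ (rather than merely $L^p$) convergence in \eqref{eq:E_n_to_E} is decisive, since $d\tilde{f}_n$ is controlled only in $L^p$ and H\"older's inequality leaves no room for a weaker hypothesis on the frames.
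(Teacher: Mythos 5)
Your proposal is correct and follows essentially the same route as the paper's proof: the recovery sequence $f_n=f\circ F_n$, the replacement of the $n$-dependent frame and volume form by their limits via the $p$-Lipschitz estimate \eqref{eq:p_Lipschitz} and the $L^\infty$ convergence \eqref{eq:E_n_to_E}, and the reduction of the liminf inequality to classical quasiconvex lower semicontinuity after extracting weak $W^{1,p}$ convergence from coercivity. The only (immaterial) difference is presentational: the paper first invokes the compactness and Urysohn properties of $\Gamma$-convergence to reduce to identifying the $\Gamma$-limit of a subsequence, and isolates the case $f\in L^p\setminus W^{1,p}$ as a separate lemma, whereas you verify the liminf/limsup inequalities directly with that case handled implicitly.
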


\begin{remark}
\label{rmk:qc}
If $\calW$ is not quasiconvex (but \eqref{eq:p_Lipschitz} holds),
then it follows from slight changes in the proof below that $\tI_n$ converges to the functional associated with $(\M,\g,E)$ and the archetype $Q\calW$, which is the quasiconvex envelope of $\calW$.
Note that it is still true that $\g$ is an intrinsic metric and that $\nabla$ is a material connection, hence Figure~\ref{fig:loop} still holds.
\end{remark}

Combining Theorem~\ref{thm:manifold_conv} and Theorem~\ref{thm:Gamma_conv}, we conclude the proof of Theorem~\ref{thm:loop}:
\begin{corollary}
\label{cor:loop}
Every two-dimensional body with a continuous distribution of dislocations $(\M,\g,E)$ is a limit of bodies with finitely many dislocations $(\M_n,\g_n,E_n)$ in the sense of Theorem~\ref{thm:manifold_conv} (equivalently~\eqref{eq:E_n_to_E}).
Given an archetype $\calW$, the elastic energies associated with $(\M_n,\g_n,E_n)$ according to Definition~\ref{def:I_tildeI} $\Gamma$-converge to the elastic energy associated with $(\M,\g,E)$.
\end{corollary}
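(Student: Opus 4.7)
The plan is to prove the two $\Gamma$-convergence inequalities (liminf and limsup) by pulling everything back to the fixed manifold $(\M,\g)$ via the homeomorphisms $F_n$. Define the (a.e.-defined) frame field $E_n^\M := dF_n\circ E_n\circ F_n^{-1}$ on $\M$ and the push-forward measure $\mu_n := (F_n)_*\dVolgn$. The hypothesis \eqref{eq:E_n_to_E} yields $E_n^\M\to E$ in $L^\infty(\M)$; since the Riemannian volume form associated with an orthonormal frame is a continuous function of the frame, the Radon-Nikodym density $\rho_n := d\mu_n/\dVolg$ converges to $1$ in $L^\infty(\M)$. Moreover, since $dF_n = E_n^\M\circ E_n^{-1}$ pointwise and $E$ is non-degenerate, $F_n$ is uniformly bilipschitz for $n$ large. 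Writing $g_n := f_n\circ F_n^{-1}$, the change of variables gives
\beq
\tI_n(f_n) = \int_\M \calW\bigl(dg_n\circ E_n^\M\bigr)\rho_n\,\dVolg.
\label{eq:pullback_form}
\eeq

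For the \emph{liminf inequality}, assume $g_n\to f$ in $L^p(\M;\R^d)$ with $\liminf\tI_n(f_n)<\infty$. The coercivity in \eqref{eq:growth_conditions}, together with $E_n^\M\to E$ and $\rho_n\to 1$ in $L^\infty$, gives a uniform $W^{1,p}$ bound on $g_n$; extracting a subsequence, $g_n\rightharpoonup f$ in $W^{1,p}$, so in particular $f\in W^{1,p}$. The $p$-Lipschitz inequality \eqref{eq:p_Lipschitz}, H\"older's inequality, and $\|E_n^\M-E\|_{L^\infty}+\|\rho_n-1\|_{L^\infty}\to 0$ imply
\[
\int_\M \bigl|\calW(dg_n\circ E_n^\M)\rho_n - \calW(dg_n\circ E)\bigr|\,\dVolg \to 0,
\]
so it suffices to show $\liminf_n\int_\M \calW(dg_n\circ E)\,\dVolg\ge \tI(f)$. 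This is a classical lower-semi-continuity result for $x$-dependent quasiconvex integrands with $p$-growth, applied in local charts to the Carath\'eodory integrand $(p,A)\mapsto \calW(A\circ E(p))$.

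For the \emph{limsup inequality}, take the recovery sequence $f_n := f\circ F_n$, which lies in $W^{1,p}(\M_n;\R^d)$ by the uniform bilipschitz bound on $F_n$ and standard composition rules. Then $g_n = f$ trivially, so the convergence in Definition~\ref{def:convergence_of_maps} is automatic, and \eqref{eq:pullback_form} becomes $\tI_n(f_n) = \int_\M \calW(df\circ E_n^\M)\rho_n\,\dVolg$. The same $p$-Lipschitz estimate applied to the fixed function $df\in L^p$, together with dominated convergence, yields $\tI_n(f_n)\to \tI(f)$.

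The main obstacle is the liminf step: reducing the variable-frame functional \eqref{eq:pullback_form} to a fixed-frame quasiconvex functional before invoking semi-continuity. The $L^\infty$ strength of \eqref{eq:E_n_to_E} is crucial here; as noted at the end of Section~\ref{sec:homogenization_geo}, a weaker $L^p$ convergence would not suffice to control the $p$-Lipschitz error $(1+|dg_n|^{p-1})|E_n^\M-E|$ in $L^1$, which is precisely why Theorem~\ref{thm:manifold_conv} is stated with the stronger $L^\infty$ convergence needed here. The limsup inequality, by contrast, is essentially free: the explicit recovery sequence $f\circ F_n$ requires only strong continuity of the energy density in the frame, not any semi-continuity argument.
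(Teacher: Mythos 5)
Your proposal is correct and follows essentially the same route as the paper: the first sentence of the corollary is exactly Theorem~\ref{thm:manifold_conv}, and your pullback of the energies to $(\M,\g)$ via $F_n$, the use of the $p$-Lipschitz estimate \eqref{eq:p_Lipschitz} together with the $L^\infty$ convergence \eqref{eq:E_n_to_E} to reduce to a fixed quasiconvex functional, the weak $W^{1,p}$ lower semicontinuity for the liminf, and the explicit recovery sequence $f\circ F_n$ for the limsup are precisely the ingredients of the paper's proof of Theorem~\ref{thm:Gamma_conv}. The only cosmetic difference is that the paper first extracts a subsequential $\Gamma$-limit via compactness and the Urysohn property and then identifies it through the same three estimates, whereas you verify the two defining inequalities directly.
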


\begin{remark}
Note that we do not rescale the elastic energies of the bodies with dislocations, that is, we are considering energies that are of order $1$.
This fits the typical heuristics for energies of dislocations: that a dislocation with a Burgers vector of magnitude $\e$ will have a self energy (or core energy) of order $\e^2 \log |\e|$, and that the interaction energy between two such dislocations will be of order $\e^2$ (see, e.g., \cite{CL05,GLP10}, which treats this in a linear case where $\e^2$ is factored out).
Indeed, in our case $(\M_n,\g_n,E_n)$ contains an order of $n^2$ dislocations of order $\e \approx n^{-2}$, so the self energy is of order $n^2 \cdot \e^2 \log |\e| \to 0$, while the interaction energy is of order $n^4 \cdot \e^2 \approx 1$.
To the best of our knowledge, this is the first rigorous framework in which an order $1$ energy limit of bodies of dislocations is obtained in non-linear settings. 

Note also that for coercive archetypes, that is, archetypes that  satisfy $\calW_{\text{iso}}(A) \ge c \dist^p(A,\SO(d))$ for some $c>0$, the limiting energy associated with $(\M,\g,E)$ is bounded away from zero if $\g$ is non-flat, that is, there are no stress-free configurations.
\end{remark}

\subsection{Proof of Theorem~\ref{thm:Gamma_conv}}

Let $\tI_\infty$ be the $\Gamma$-limit of a (not-relabeled) subsequence of $\tI_n$. Such a subsequence always exists by the general compactness theorem of $\Gamma$-convergence (see Theorem~8.5 in \cite{Dal93} for the classical result, or Theorem 4.7 in \cite{KS08} for the case where each functional is defined on a different space).
It is enough to prove that $\tI_\infty =  \tI$.
Indeed, since by the compactness theorem, every sequence has a $\Gamma$-converging subsequence, the Urysohn property of $\Gamma$-convergence (see Proposition 8.3 in \cite{Dal93}) implies that if all converging subsequences converge to the same limit, then the entire sequence converges to that limit.

From \eqref{eq:E_n_to_E} it follows that 
\begin{enumerate}
\item $dF_n$ and $dF_n^{-1}$ are uniformly bounded.
\item  $(F_n)_\star\g_n\to \g$ in $L^\infty$, and in particular, $(F_n)_\star \dVolgn \to \dVolg$ in $L^\infty$.
\end{enumerate}

\begin{lemma}[Infinity case]
\label{lem:infinity_case}
Let $f\in L^p(\M;\R^d)\setminus W^{1,p}(\M;\R^d)$. Then,
\[
\tI_\infty(f) = \infty = \tI(f).
\]
\end{lemma}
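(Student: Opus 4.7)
The equality $\tI(f) = \infty$ is immediate from the definition of $\tI$, so the real content is showing $\tI_\infty(f) = \infty$. My plan is to argue by contradiction: suppose $\tI_\infty(f) < \infty$, extract a recovery sequence $f_n : \M_n \to \R^d$ with $f_n \to f$ in the sense of Definition~\ref{def:convergence_of_maps} and $\sup_n \tI_n(f_n) < \infty$, then use the lower bound in \eqref{eq:growth_conditions} and the convergence \eqref{eq:E_n_to_E} to deduce that the pulled-back maps $g_n := f_n \circ F_n^{-1}$ are bounded in $W^{1,p}(\M;\R^d)$. Since $g_n \to f$ in $L^p(\M;\R^d)$ by hypothesis, weak compactness of $W^{1,p}$ forces $f \in W^{1,p}(\M;\R^d)$, contradicting the hypothesis.

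The key step is the uniform $W^{1,p}$ bound on $g_n$. From $\tI_n(f_n) \le C$ and $\calW(A) \ge \alpha(|A|^p - 1)$, one gets
\[
\int_{\M_n} |df_n \circ E_n|^p \,\dVolgn \le C'.
\]
Writing $B_n := dF_n \circ E_n$, the chain rule gives $df_n|_p \circ E_n|_p = dg_n|_{F_n(p)} \circ B_n|_p$. Changing variables under $F_n$, and using that $(F_n)_\star \dVolgn \to \dVolg$ in $L^\infty$ (so the Jacobian factor is uniformly bounded above and below), converts this into a bound on $\int_\M |dg_n \circ (B_n\circ F_n^{-1})|^p \,\dVolg$. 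Finally, \eqref{eq:E_n_to_E} says $B_n\circ F_n^{-1} \to E$ in $L^\infty(\M)$, and since $E$ is an orthonormal frame, it is pointwise invertible with $|E^{-1}| \equiv 1$; hence $B_n\circ F_n^{-1}$ is uniformly invertible for $n$ large, and we can absorb it to obtain $\int_\M |dg_n|^p \,\dVolg \le C''$.

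The main technical obstacle is the bookkeeping in this change of variables: $B_n$ is a section over $\M_n$ taking values in $F_n^*T\M$, while $E$ lives over $\M$, so one must consistently pull back or push forward, verify that the uniform invertibility of $B_n\circ F_n^{-1}$ really does follow from the $L^\infty$ closeness to the isometry $E$ (via a Neumann-series type argument at each point), and use the bounds on $dF_n$, $dF_n^{-1}$, and the Jacobian of $F_n$ noted just before the lemma. Once these pieces are in place, the conclusion follows by combining the uniform $W^{1,p}$ bound with the $L^p$ convergence $g_n \to f$ and the weak closure of $W^{1,p}$ in $L^p$.
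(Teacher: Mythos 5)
Your proposal is correct and follows essentially the same route as the paper: contradiction via a recovery sequence, the coercivity bound \eqref{eq:growth_conditions} to get a uniform $W^{1,p}$ bound on $f_n$, transfer to $\M$ using the uniform bounds on $dF_n^{-1}$, and weak compactness plus uniqueness of the $L^p$ limit to force $f\in W^{1,p}(\M;\R^d)$. The extra bookkeeping you describe for the change of variables is exactly what the paper's terser phrase ``since $dF_n^{-1}$ are uniformly bounded'' is implicitly relying on.
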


\begin{proof}\smartqed
Suppose, by contradiction, that $\tI_\infty(f)<\infty$. Let $f_n\to f$ be a recovery sequence, namely,
\[
\lim_{n\to\infty} \tI_n(f_n) =  I_\infty(f) < \infty.
\] 
Without loss of generality we may assume that $\tI_n(f_n)<\infty$ for all $n$, and in particular,
$f_n\in W^{1,p}(\M_n,\R^d)$. 
The coercivity of $W_n$ implies that
\[
\sup_n \int_{\M_n}  |df_n|_{\g_n,\euc}^p \,\dVolgn < \infty.
\]
Thus, $f_n$ is uniformly-bounded in $W^{1,p}$, and since $dF_n^{-1}$ are uniformly-bounded,
$f_n\circ F^{-1}_n$ is also uniformly-bounded in $W^{1,p}(\M;\R^d)$, hence weakly converges (modulo a subsequence). 
By the uniqueness of the limit, this limit is $f$, hence $f\in W^{1,p}(\M;\R^d)$, which is a contradiction.
\qed\end{proof}

\begin{lemma}[Upper bound]
\label{lem:upper_bound}
For every $f\in W^{1,p}(\M;\R^d)$,
\[
\tI_\infty(f) \le  \tI(f).
\]
\end{lemma}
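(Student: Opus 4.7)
\smartqed
The plan is to construct an explicit recovery sequence by pullback and pass to the limit by dominated convergence, using nothing more than the $p$-Lipschitz bound \eqref{eq:p_Lipschitz} and the two uniform estimates that follow \eqref{eq:E_n_to_E}. Given $f\in W^{1,p}(\M;\R^d)$, I take
\[
f_n := f\circ F_n \colon \M_n \to \R^d.
\]
Because $dF_n$ and $dF_n^{-1}$ are uniformly bounded (item 1 after \eqref{eq:E_n_to_E}), each $F_n$ is bi-Lipschitz, so the chain rule gives $f_n\in W^{1,p}(\M_n;\R^d)$ with $df_n = (df)\circ dF_n$ a.e. Since $f_n\circ F_n^{-1} = f$, the sequence $f_n$ converges to $f$ trivially in the sense of Definition~\ref{def:convergence_of_maps}, so it is an admissible candidate.

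Next I would change variables via $F_n$ to rewrite $\tI_n(f_n)$ on the fixed manifold $\M$. Setting
\[
\tilde E_n := (dF_n\circ E_n)\circ F_n^{-1}\colon \M\times \R^d \to T\M,
\]
and letting $j_n$ denote the Radon--Nikodym density of $(F_n)_\star \dVolgn$ with respect to $\dVolg$, the chain rule combined with the change-of-variables formula gives
\[
\tI_n(f_n) \;=\; \int_{\M_n} \calW\brk{df\circ (dF_n\circ E_n)} \,\dVolgn \;=\; \int_\M \calW(df\circ \tilde E_n)\, j_n \,\dVolg.
\]
Hypothesis \eqref{eq:E_n_to_E} says exactly that $\tilde E_n \to E$ in $L^\infty(\M)$, and item 2 after \eqref{eq:E_n_to_E} gives $j_n \to 1$ in $L^\infty(\M)$.

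The final step is to pass to the limit under the integral. By \eqref{eq:p_Lipschitz} and the uniform bound on $\|\tilde E_n\|_{L^\infty}$,
\[
|\calW(df\circ \tilde E_n) - \calW(df\circ E)| \;\le\; C\,(1+|df|^{p-1})\,|df|\,\|\tilde E_n - E\|_{L^\infty},
\]
which tends to zero in $L^1(\M)$ since $f\in W^{1,p}$. The growth bound \eqref{eq:growth_conditions} gives the dominating function $\calW(df\circ \tilde E_n)\, j_n \le C(1+|df|^p)\in L^1(\M)$, so dominated convergence (together with the uniform convergence $j_n\to 1$) yields
\[
\lim_{n\to\infty} \tI_n(f_n) \;=\; \int_\M \calW(df\circ E)\,\dVolg \;=\; \tI(f).
\]
Thus $\tI_\infty(f) \le \limsup_n \tI_n(f_n) = \tI(f)$, as required.

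I do not expect a substantial obstacle: quasi-convexity plays no role in the upper bound (it is reserved for the lower bound), the hypothesis $\calW<\infty$ keeps all pointwise estimates finite, and since $F_n$ is already Lipschitz no mollification or density argument is needed---the constant sequence $f\circ F_n$ is already a recovery sequence. The only mildly technical point is invoking the $W^{1,p}$ chain rule for bi-Lipschitz homeomorphisms, which is standard.
\qed
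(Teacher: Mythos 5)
Your proposal is correct and follows essentially the same route as the paper: the same recovery sequence $f_n=f\circ F_n$, with convergence $\tI_n(f_n)\to\tI(f)$ obtained from the uniform convergences $dF_n\circ E_n\to E$, $(F_n)_\star\dVolgn\to\dVolg$ and the $p$-Lipschitz property \eqref{eq:p_Lipschitz}. You merely spell out the change-of-variables and dominated-convergence details that the paper leaves implicit.
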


\begin{proof}\smartqed
Let $f\in W^{1,p}(\M;\R^d)$. 
Define $f_n = f\circ F_n \in W^{1,p}(\M_n;\R^d)$. Trivially, $f_n \to f$ in $L^p$ according to Definition~\ref{def:convergence_of_maps}
and by the definition of the $\Gamma$-limit,
\[
\tI_\infty(f) \le \liminf_n \tI_n(f_n). 
\]
It follows from the uniform convergence $dF_n\circ E_n \to E$ and $(F_n)_\star \dVolgn\to \dVolg$, using the $p$-Lipschitz property \eqref{eq:p_Lipschitz}, that
\[
\lim_n \tI_n(f_n) = \tI(f),
\]
that is
\beq
\label{eq:upper_bound}
\lim_n \int_{\M_n} \calW(df\circ dF_n \circ E_n)\,\dVolgn = \int_{\M} \calW(df\circ E) \,\dVolg.
\eeq
\qed\end{proof}

\begin{lemma}[Lower bound]
\label{lem:lower_bound}
For every $f\in W^{1,p}(\M;\R^d)$,
\[
\tI_\infty(f) \ge  \tI(f).
\]
\end{lemma}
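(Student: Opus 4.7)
The plan is to reduce the liminf inequality on the varying manifolds $\M_n$ to the standard weak lower semicontinuity of a quasiconvex integral on the fixed manifold $\M$, using the homeomorphisms $F_n$ to transport every ingredient to $\M$. By the definition of the $\Gamma$-liminf it suffices to show that for an arbitrary sequence $f_n:\M_n\to\R^d$ with $f_n\circ F_n^{-1}\to f$ in $L^p(\M;\R^d)$, one has $\liminf_n \tI_n(f_n)\ge \tI(f)$.

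First I would extract the appropriate compactness: one may assume $\tI_n(f_n)$ is uniformly bounded, for otherwise the inequality is trivial. The coercive lower bound in \eqref{eq:growth_conditions}, together with the $\g_n$-orthonormality of $E_n$, yields a uniform $W^{1,p}(\M_n;\R^d)$ bound on $f_n$. The frame convergence \eqref{eq:E_n_to_E} together with the pointwise invertibility of $E$ makes $dF_n^{-1}$ uniformly bounded, so the pulled-back maps $\tilde f_n:=f_n\circ F_n^{-1}$ are uniformly bounded in $W^{1,p}(\M;\R^d)$; combined with their $L^p$-convergence to $f$, this gives $\tilde f_n\rightharpoonup f$ weakly in $W^{1,p}(\M;\R^d)$.

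The second step is to pull the energy back to $\M$ via a change of variables and replace the nearly-orthonormal frame $\tilde E_n:=(dF_n\circ E_n)\circ F_n^{-1}$ by the limit $E$:
\[
\tI_n(f_n)\;=\;\int_\M \calW\bigl(d\tilde f_n\circ \tilde E_n\bigr)\,\rho_n\,\dVolg,
\]
where $\rho_n\to 1$ in $L^\infty$ (both $\tilde E_n$ and $E$ are asymptotically orthonormal for $\g$, so the volume-form ratio tends to $1$ uniformly). Using the $p$-Lipschitz estimate \eqref{eq:p_Lipschitz}, the $L^\infty$ convergences $\tilde E_n\to E$ and $\rho_n\to 1$, and the uniform $W^{1,p}$ bound on $\tilde f_n$, H\"older's inequality produces
\[
\tI_n(f_n)\;=\;\int_\M \calW(d\tilde f_n\circ E)\,\dVolg + o(1).
\]

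The final step is weak lower semicontinuity on $\M$: for each $x$, $A\mapsto \calW(A\circ E(x))$ is quasiconvex in $A$ (composition with an invertible linear map preserves quasiconvexity) and has $p$-growth uniformly in $x$. Covering $\M$ with finitely many coordinate charts trivializing $T\M$, using a subordinate partition of unity, and applying the classical Acerbi--Fusco lower semicontinuity theorem (e.g.\ Theorem~8.11 in \cite{Dac08}) in each chart yields
\[
\liminf_n \int_\M \calW(d\tilde f_n\circ E)\,\dVolg\;\ge\;\int_\M \calW(df\circ E)\,\dVolg\;=\;\tI(f),
\]
which together with the previous step completes the proof. The principal obstacle is the second step: transferring the error from $\tilde E_n$ to $E$ inside a quasiconvex integrand with $p$-growth requires balancing the $L^\infty$-smallness of $\tilde E_n-E$ against the merely $L^p$-integrable $|d\tilde f_n|^{p-1}$ through \eqref{eq:p_Lipschitz}; once this is in place, the lower semicontinuity step on $\M$ is essentially classical.
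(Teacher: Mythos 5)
Your proposal is correct and follows essentially the same route as the paper's proof: extract a uniform $W^{1,p}$ bound from the coercivity, pull the energies back to $\M$ via $F_n$ and use the $p$-Lipschitz estimate \eqref{eq:p_Lipschitz} together with the $L^\infty$ convergence of the frames and volume forms to replace $dF_n\circ E_n$ by $E$, and then conclude by weak lower semicontinuity of the quasiconvex functional on $\M$ (the paper likewise cites \cite[Theorem~8.11]{Dac08}). The only cosmetic difference is that you prove the liminf inequality for an arbitrary converging sequence, whereas the paper works directly with a recovery sequence.
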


\begin{proof}\smartqed
Let $f\in W^{1,p}(\M;\R^d)$, and let $f_n\in L^p(\M;\R^d)$ be a recovery sequence for $f$, that is
$f_n\circ F_n^{-1} \to f$ in $L^p$ and $\tI_n(f_n) \to \tI_\infty(f)$.
In particular, it follows that we can assume without loss of generality that $f_n\in W^{1,p}$, and that $f_n$ are uniformly bounded in $W^{1,p}$.
Therefore, $f_n\circ F_n^{-1} \rightharpoonup f$ in $W^{1,p}(\M;\R^d)$.
We need to show that
\beq
\label{eq:lower_bound}
\lim_n \tI_n(f_n) \ge \tI(f).
\eeq
Note that since $f\in W^{1,p}(\M;\R^d)$ and $f_n \in W^{1,p}(\M_n;\R^d)$, $\tI(f) = I(f)$ and $\tI_n(f_n) = I_n(f_n)$.
Since $dF_n\circ E_n \to E$ and $(F_n)_\star \dVolgn\to \dVolg$ uniformly, and $df_n\circ dF^{-1}_n$ are uniformly bounded in $L^p$, the $p$-Lipschitz property \eqref{eq:p_Lipschitz} implies that
\beq
\label{eq:auxiliary_calc_lower_bound}
\begin{split}
\lim_n I_n(f_n) &= \lim_n \int_{\M_n} \calW(df_n \circ E_n)\,\dVolgn \\
	&= \lim_n \int_{\M_n} \calW(df_n \circ dF_n^{-1} \circ E)\,\dVolg = \lim_n I(f_n\circ F_n^{-1}).
\end{split}
\eeq
Since $\calW$ is quasiconvex and satisfies \eqref{eq:growth_conditions},  $I(\cdot)$ is lower semicontinuous with respect to the weak topology of $W^{1,p}(\M;\R^d)$ \cite[Theorem~8.11]{Dac08}.
Since $f_n\circ F_n^{-1}$ converges weakly to $f$ in $W^{1,p}(\M;\R^d)$, 
\[
\lim_n I(f_n\circ F_n^{-1}) \ge I(f),
\]
which together with \eqref{eq:auxiliary_calc_lower_bound} implies \eqref{eq:lower_bound}.
\qed\end{proof}

\appendix
\section{The role of torsion in the equilibrium equations}
\label{sec:appendix}

In this section we analyze explicitly the equilibrium equations for a hyperelastic solid body having a continuous distribution of dislocations, and in particular, we address the role of torsion.
We will explain why torsion does not enter explicitly in the equilibrium of an isotropic body.
Similar equations are derived in \cite[Section~12]{Wan67} (without the hyperelasticity assumption).
Throughout this section we use the Einstein summation convention.

Let $\calW\in C^2(\R^d\times \R^d)$ be a solid undistorted archetype, and let $(\M,W)$ be a uniform solid material having $\calW$ as an archetype with respect to an implant map $E=\{E_p\}_{p\in \M}$. 
We denote the (matrix) argument of $\calW$ by $B=(B_1\, | \,\ldots \,| \, B_d)$, and by $\pl \calW/\pl B_i : \R^d\times \R^d\to \R^d$ the derivative of $\calW$ with respect to the column $B_i$ (this is a vector).

The implant map $E$ is a parallel frame of a flat material connection $\nabla$ (defined by \eqref{eq:implants_define_connection}) and it defines a metric $\g$ via \eqref{eq:implants_define_metric}.
$E$ is a $d$-tuple of vector fields which we denote by $E_1,\ldots,E_d$. 
Its co-frame $E^1,\ldots,E^d$ is the $d$-tuple of one-forms defined by $E^i(E_j) = \delta^i_j$.
The torsion tensor of $\nabla$ is given by
\[
T(E_i,E_j) = - [E_i,E_j] =: T_{ij}^k E_k,
\]
as follows from the definition of the torsion tensor $T(X,Y) = \nabla_X Y - \nabla_Y X - [X,Y]$, since $E_i$ are parallel, which means $\nabla E_i=0$.

The elastic energy functional corresponding to this elastic body is 
\[
I(f) = \int_\M W(df)\, \dVolg = \int_\M \calW(df\circ E) \, E^1\wedge \ldots \wedge E^d,
\]
defined on functions $f:\M\to \R^d$.
The Euler-Lagrange equations corresponding to this functional are, in a weak formulation,
\[
\int_\M \frac{\pl \calW}{\pl B_i}(df\circ E) \cdot E_i(h) \,\dVolg = 0\, \qquad \forall \, h\in C_c^\infty(\M;\R^d).
\]
where $E_i(h) = dh(E_i) : \M \to \R^d$, and $\cdot$ is the standard inner product in $\R^d$. 
The strong formulation of the Euler-Lagrange equations is 
\[
E_i\brk{\frac{\pl \calW}{\pl B_i}(df\circ E)} + \frac{\pl \calW}{\pl B_i}(df\circ E)\, \div E_i =0,
\]
or more explicitly,
\[
\frac{\pl^2 \calW}{\pl B_i\pl B_j}(df\circ E)\, E_i E_j (f)+ \frac{\pl \calW}{\pl B_i}(df\circ E)\, \div E_i =0,
\]
where $\div E_i$ is defined by the relation
\[
d(\iota_{E_i} \dVolg) = \div E_i \, \dVolg,
\]
where $\iota$ is the contraction operator.
Using the fact that $\dVolg = E^1\wedge \ldots \wedge E^d$, 
\[
\iota_{E_i} \dVolg = (-1)^{i+1} E^1\wedge \ldots \wedge E^{i-1} \wedge E^{i+1} \wedge\ldots\wedge E^d,
\]
hence
\[
\begin{split}
d(\iota_{E_i} \dVolg) &= (-1)^{i+1}  \left(dE^1\wedge \ldots \wedge E^{i-1} \wedge E^{i+1} \wedge\ldots\wedge E^d + \ldots \right.\\
	&\qquad \left. \ldots + (-1)^{d-1} E^1\wedge \ldots \wedge E^{i-1} \wedge E^{i+1} \wedge\ldots\wedge dE^d \right).
\end{split}
\]
By the definition of the exterior derivative, and the fact that $E^k(E_i) = \delta_i^k$,  
\[
dE^k(E_i,E_j) = E_i (E^k(E_j)) - E_j (E^k(E_i)) - E^k([E_i,E_j]) = T_{ij}^l E^k(E_l) = T_{ij}^k
\]
and therefore $dE^k = T_{ij}^k E^i\wedge E^j$, so $d(\iota_{E_i} \dVolg)$ simplifies to
\[
d(\iota_{E_i} \dVolg) = -T^j_{ji} \,\dVolg,
\]
hence $\div E_i = -T^j_{ji}$. It follows that the Euler-Lagrange equations are
\[
\frac{\pl^2 \calW}{\pl B_i\pl B_j}(df\circ E)\, E_i E_j (f)  - T_{ji}^j \frac{\pl\calW}{\pl A_i}(df\circ E) = 0.
\]
The trace of the torsion appears explicitly in the equations,
however, the torsion also appears, more implicitly, as the antisymmetric part $E_i E_j - E_j E_i = T_{ij}^k E_k$ of the first addend.

If the solid is isotropic, then the equilibrium equations are independent of the torsion.  Isotropy means that 
\[
\calW{B\circ R} = \calW{B} \qquad\text{for any $R\in \SO(d)$}.
\]
Using polar decomposition, this implies that there exists a function $\widetilde{\calW}:\operatorname{Sym}_+(d)\to\R$, where $\operatorname{Sym}_+(d)$ is the set of positive-semidefinite $d\times d$ symmetric matrices, such that 
\[
\calW(B) = \widetilde{\calW}(BB^T) 
\]
\cite[Theorem~3.4-1]{Cia88} (if one allows $B$ to be orientation reversing, then $\widetilde{\calW}$ also depends on the orientation of $B$, but this does not affect the argument below and therefore we ignore this subtlety).
It follows that
\[
I(f) = \int_\M W(df)\, \dVolg = \int_\M \calW(df\circ E) \, \dVolg = \int_\M \widetilde{\calW}((df\circ E)(df \circ E)^T) \, \dVolg.
\]
Choosing coordinates on $\M$, we can think of $df$ and $E$ as matrices.
In this case, since $E$ is an orthonormal frame for $\g$, $EE^T = \g^*$, the $\g$-metric on $T^*\M$ (whose coordinate are $\g^{ij}$).
Therefore, in coordinates,
\[
I(f) = \int_\M \widetilde{\calW}(df_x\circ \g_x^* \circ df_x^T) \, \sqrt{|\g|}(x) \, dx.
\]
In a more abstract language, 
\[
I(f) = \int_\M \tilde{\calW}(f_\star \g^*)\, \dVolg
\]
where $f_\star \g^*$ is the push-forward by $f$ of the metric $\g^*$ from $T^*\M$ to $\R^d$.
Either way, it is clearly seen that the energy (and therefore the equilibrium equations) only depend on $\g$ and not on the frame $E$, and therefore not on the connection $\nabla$ and its torsion which are derived from $E$.

\begin{acknowledgement}
This project was initiated in the Oberwolfach meeting "Material Theories" in July 2018. 
RK was partially funded by the Israel Science Foundation (Grant No. 1035/17), and by a grant from the Ministry of Science, Technology and Space, Israel and the Russian Foundation for Basic Research, the Russian Federation.
\end{acknowledgement}

\bibliographystyle{amsalpha}

\begin{thebibliography}{{dal}93}

\bibitem[AP04]{AP04}
H.I.~ Arcos, and J.G.~Pereira, \emph{Torsion Gravity: a Reappraisal}, International Journal of Modern Physics D
  \textbf{13} (2004), no.~10, 2193--2240.

\bibitem[BBS55]{BBS55}
B.A. Bilby, R.~Bullough, and E.~Smith, \emph{Continuous distributions of
  dislocations: A new application of the methods of {Non-Riemannian} geometry},
  Proc. Roy. Soc. A \textbf{231} (1955), 263--273.

\bibitem[Ber02]{Ber02}
M.~Berger, \emph{A panoramic view of riemannian geometry}, Springer, 2002.


\bibitem[CL05]{CL05}
P.~Cermelli and G.~Leoni, \emph{Renormalized energy and forces on dislocations},
  SIAM journal on mathematical analysis \textbf{37} (2005), no.~4, 1131--1160.

\bibitem[Cia88]{Cia88}
P.G. Ciarlet, \emph{Mathematical elasticity, volume 1: Three-dimensional
  elasticity}, Elsevier, 1988.
  
\bibitem[CK13]{CK13}
D.~Christodoulou and I.~Kaelin, \emph{On the mechanics of crystalline
  solids with a continuous distribution of dislocations}, Advances in
  Theoretical and Mathematical Physics \textbf{17} (2013), no.~2, 399--477.
  
 \bibitem[CMS84]{CMS84}
J.~Cheeger, W.~M\"uller, and R.~Schrader, \emph{On the curvature of
  piecewise flat spaces}, Commun. Math. Phys. \textbf{92} (1984), 405--454.

\bibitem[Dac08]{Dac08}
B.~Dacorogna, \emph{Direct methods in the calculus of variations}, 2nd ed., Springer, 2008.

\bibitem[{dal}93]{Dal93}
G.~{dal Maso}, \emph{An introduction to {$\Gamma$}-convergence}, Birkh{\"a}user,
  1993.

 \bibitem[Dav86]{Dav86}
C.~Davini, \emph{A proposal for a continuum theory of defective crystals}, Arch. Rat. Mech. Anal.
  \textbf{96} (1986), 295--317. 
 
 \bibitem[Dol12]{Dol12}
G.~Dolzmann, \emph{Regularity of minimizers in nonlinear elasticity -- the case
  of a one-well problem in nonlinear elasticity}, Technische Mechanik
  \textbf{32} (2012), 189--194.
  
  \bibitem[DVW15]{DVW15}
R.~Dyer, G.~Vegter, and M.~Wintraecken, \emph{Riemannian simplices and
  triangulations}, Geometriae Dedicata \textbf{179} (2015), 91--138.
  
  \bibitem[EE{\'{S}}90]{EES90}
M.~El{\.{z}}anowski, M.~Epstein, and J.~{\'{S}}niatycki, \emph{G-structures and
  material homogeneity}, Journal of Elasticity \textbf{23} (1990), no.~2,
  167--180.

\bibitem[ES14a]{ES14}
M.~Epstein and R.~Segev, \emph{Geometric aspects of singular
  dislocations}, Mathematics and Mechanics of Solids \textbf{19} (2014), no.~4,
  337--349.

\bibitem[ES14b]{ES14b}
\bysame, \emph{Geometric theory of smooth and singular defects}, International
  Journal of Non-Linear Mechanics \textbf{66} (2014), 105--110.
  
 \bibitem[GLP10]{GLP10}
A.~Garroni, G.~Leoni, and M.~Ponsiglione, \emph{Gradient theory for plasticity via homogenization of discrete dislocations}, J. Eur. Math. Soc. \textbf{12} (2010), 1231--1266.

\bibitem[HS79]{HS79}
K.~Hayashi and T.~Shirafuji, \emph{New general relativity}, Physica D
  \textbf{19} (1979), 3524--3553.
  
\bibitem[Kat05]{Kat05}
M.~O. Katanaev, \emph{Geometric theory of defects}, UFN \textbf{175} (2005),
  no.~7, 705--733.

\bibitem[KM15]{KM15}
R.~Kupferman and C.~Maor, \emph{The emergence of torsion in the continuum limit
  of distributed dislocations}, J. Geom. Mech. \textbf{7} (2015), 361--387.
  
\bibitem[KM16a]{KM16}
\bysame, \emph{Limits of elastic models of converging {Riemannian} manifolds},
  Calc. Variations and PDEs \textbf{55} (2016), 40.

\bibitem[KM16b]{KM15b}
\bysame, \emph{Riemannian surfaces with torsion as homogenization limits of
  locally-{Euclidean} surfaces with dislocation-type singularities}, Proc. Roy.
  Soc. Edinburgh \textbf{146A} (2016), no.~4, 741--768.

\bibitem[KM18]{KM18}
\bysame, \emph{Variational convergence of discrete
  geometrically-incompatible elastic models}, Calc. Var. PDEs
  \textbf{57} (2018), no.~2, 39.
  
  \bibitem[KO]{KO19}
R.~Kupferman and E.~Olami, \emph{Homogenization of edge-dislocations as a weak
  limit of de-rham currents}, \url{https://arxiv.org/abs/1810.12359}.

\bibitem[Kon55]{Kon55}
K.~Kondo, \emph{Geometry of elastic deformation and incompatibility}, Memoirs
  of the Unifying Study of the Basic Problems in Engineering Science by Means
  of Geometry (K.~Kondo, ed.), vol.~1, 1955, pp.~5--17.
  
\bibitem[Kr{\"o}63]{Kro63b}
E.~Kr{\"o}ner, \emph{The dislocation as a fundamental new concept in
  continuum mechanics}, Materials Science Research (H.~H. Stadelmaier and W.~W.
  Austin, eds.), Springer US, Boston, MA, 1963, pp.~281--290.

\bibitem[Kr{\"o}81]{Kro81}
\bysame, \emph{The physics of defects}, Les Houches Summer School
  Proceedings (Amsterdam) (R.~Balian, M.~Kleman, and J.-P. Poirier, eds.),
  North-Holland, 1981.

\bibitem[Kr{\"o}96]{Kro96}
\bysame, \emph{Dislocation theory as a physical field theory}, Meccanica
  \textbf{31} (1996), 577--587.

\bibitem[KS08]{KS08}
K.~Kuwae and T.~Shioya, \emph{Variational convergence over metric
  spaces}, Trans. Amer. Math. Soc. \textbf{360} (2008), no.~1, 35--75.
  
\bibitem[LO15]{LO15}
M.~Lewicka and P.~Ochoa, \emph{On the variational limits of lattice energies on
  prestrained elastic bodies}, Differential Geometry and Continuum Mechanics
  (G.-Q.~G. Chen, M.~Grinfeld, and R.~J. Knops, eds.), Springer, 2015,
  pp.~279--305.

\bibitem[Nol58]{Nol58}
W.~Noll, \emph{A mathematical theory of the mechanical behavior of continuous
  media}, Arch. Rat. Mech. Anal. \textbf{2} (1958), 197--226.

\bibitem[Nye53]{Nye53}
J.F. Nye, \emph{Some geometrical relations in dislocated crystals}, Acta Met.
  \textbf{1} (1953), 153--162.

\bibitem[OY14]{OY14}
A.~Ozakin and A.~Yavari, \emph{Affine development of closed curves in
  {Weitzenb\"ock} manifolds and the burgers vector of dislocation mechanics},
  Math. Mech. Solids \textbf{19} (2014), 299--307.
  
\bibitem[Sch54]{Sch54}
J.A. Schouten, \emph{Ricci-calculus}, Springer-Verlag Berlin Heidelberg, 1954.

\bibitem[{\v S}il01]{Sil01}
M.~{\v S}ilhav{\'y}, \emph{Rank 1 convex hulls of isotropic functions in
  dimension 2 by 2}, Math. Bohem. \textbf{126} (2001), 521--529.

\bibitem[Vol07]{Vol07}
V.~Volterra, \emph{Sur l'\'equilibre des corps \'elastiques multiplement
  connexes}, Ann. Sci. Ecole Norm. Sup. Paris 1907 \textbf{24} (1907),
  401--518.

\bibitem[Wan67]{Wan67}
C.-C. Wang, \emph{On the geometric structures of simple bodies, a mathematical
  foundation for the theory of continuous distributions of dislocations}, Arch.
  Rat. Mech. Anal. \textbf{27} (1967), 33--93.

\bibitem[Wei23]{Wei23}
R.~Weitzenb\"ock, \emph{Invariantentheorie}, ch.~XIII, Sec 7, Nordhoff,
  Groningen, 1923.

\bibitem[YG12]{YG12b}
A.~Yavari and A.~Goriely, \emph{Weyl geometry and the nonlinear mechanics of
  distributed point defects}, Proc. Roy. Soc. A \textbf{468} (2012),
  3902--3922.

\end{thebibliography}
\providecommand{\bysame}{\leavevmode\hbox to3em{\hrulefill}\thinspace}
\providecommand{\MR}{\relax\ifhmode\unskip\space\fi MR }
\providecommand{\MRhref}[2]{%
  \href{http://www.ams.org/mathscinet-getitem?mr=#1}{#2}
}
\providecommand{\href}[2]{#2}

\end{document}